\newcommand{\Rects}{R}
\newcommand{\ceil}[1]{\lceil{#1}\rceil}
\newcommand{\Skyline}{\mathrm{Skyline}} 
\newcommand{\Pred}{\mathrm{Pred}}
\newcommand{\Succ}{\mathrm{Succ}}
\newcommand{\rev}{\mathrm{rev}}
\newcommand{\bottom}{\mathrm{bottom}}
\newcommand{\ttop}{\mathrm{top}}
\newcommand{\Rightmost}{\mathrm{Rightmost}}
\newcommand{\select}{\mathrm{select}}
\newtheorem{corollary}{Corollary}
\newtheorem{theorem}{Theorem}
\newtheorem{lemma}{Lemma}
\newenvironment{proof}{\noindent\textit{Proof.} }{\mbox{}\hfill$\Box$}
\title{Optimal Planar Orthogonal Skyline Counting Queries%
\thanks{Full version of paper appearing in the proceedings of the 
  14th Scandinavian Symposium and Workshops on Algorithm Theory, 2014.}}
\author{Gerth St{\o}lting Brodal\thanks{MADALGO,
	Center for Massive Data Algorithmics, 
	a Center of the Danish National Research Foundation (grant DNRF84).
	Department of Computer Science, Aarhus University.
	$\{$\texttt{gerth,larsen}$\}$\texttt{@cs.au.dk}.}
 \and 
 	Kasper Green Larsen\footnotemark[1]}
\date{Arpil 24, 2014}
\begin{document}

\maketitle


\begin{abstract} 
  The skyline of a set of points in the plane is the subset of maximal
  points, where a point $(x,y)$ is maximal if no other point $(x',y')$
  satisfies $x'\geq x$ and $y'\geq y$. We consider the problem of
  preprocessing a set $P$ of $n$ points into a space efficient static
  data structure supporting orthogonal skyline counting queries,
  i.e.\ given a query rectangle $R$ to report the size of the skyline
  of $P\cap R$. We present a data structure for storing $n$ points
  with integer coordinates having query time $O(\lg n/\lg\lg n)$ and
  space usage $O(n)$ words.  The model of computation is a unit cost
  RAM with logarithmic word size.  We prove that these bounds are the
  best possible by presenting a matching lower bound in the cell probe
  model with logarithmic word size: Space usage $n\lg^{O(1)} n$
  implies worst case query time $\Omega(\lg n/\lg\lg n)$.
\end{abstract}

\section{Introduction}

In this paper we consider orthogonal range skyline queries for a set
of points in the plane. A point $(x,y)\in \mathbb{R}^2$
\emph{dominates} a point $(x',y')$ if and only if $x'\leq x$ and
$y'\leq y$. For a set of points~$P$, a point $p\in P$ is
\emph{maximal} if no other point in $P$ dominates $p$, and the
\emph{skyline} of~$P$, $\Skyline(P)$, is the subset of maximal points
in $P$.

We consider the problem of preprocessing a set $P$ of $n$ points in
the plane with integer coordinates into a data structure to support
\emph{orthogonal range skyline counting} queries: Given an
axis-aligned query rectangle $R=[x_1,x_2]\times[y_1,y_2]$ to report
the size of the skyline of the subset of the points from $P$ contained
in $R$, i.e.\ report $|\Skyline(P\cap R)|$. The main results of this
paper are matching upper and lower bounds for data structures
supporting such queries, thus completely settling the problem. Our
model of computation is the standard unit cost RAM with logarithmic
word size.

\subsection{Previous Work}

Orthogonal range searching is one of the most fundamental and
well-studied topics in computational geometry, see e.g.~\cite{socg11}
for an extensive list of previous results. For orthogonal range
queries in the plane, with integer coordinates in $[n] \times [n] =
\{0,\dots,n-1\} \times \{0,\dots,n-1\}$, the main results are the
following: For the orthogonal range \emph{counting} problem,
i.e.\ queries report the total number of input points inside a query
rectangle, optimal $O(\lg n/\lg\lg n)$ query time using $O(n)$ space
was achieved in~\cite{Jaja04}. Optimality was shown in \cite{stoc07},
where it was proved that space $n\lg^{O(1)} n$ implies query time
$\Omega(\lg n/\lg\lg n)$ for range counting queries.

For range \emph{reporting} queries it is known that space $n\lg^{O(1)}
n$ implies query time $\Omega(\lg\lg n+k)$, where $k$ is the number of
points reported within the query range \cite{stoc06}.  The best upper
bounds known for range reporting are: Optimal space $O(n)$ and query
time $O((k+1)\lg^\varepsilon n)$ \cite{socg11}, and optimal query time
$O(\lg\lg n+k)$ with space $O(n\lg^\varepsilon n)$~\cite{focs00}. In
both cases $\varepsilon>0$ is an arbitrarily small constant.

\paragraph{Orthogonal Range Skyline Queries.}

Orthogonal range skyline counting queries were first consider
in~\cite{walcom12}, where a data structure was presented with space
usage $O(n \lg^2n / \lg \lg n)$ and query time $O(\lg^{3/2}n/\lg\lg
n)$. This was subsequently improved to $O(n \lg n)$ space and $O(\lg
n)$ query time~\cite{iwoca12}. Finally, a data structure achieving an
even faster query time of $O(\lg n /\lg \lg n)$ was presented, however
the space usage of that solution was a prohibitive $O(n \lg^3 n/\lg
\lg n)$~\cite{walcom13}. Thus to date, no linear space solution exists
with a non-trivial query time. Also, from a lower bound perspective,
it is not known whether the problem is easier or harder than the
standard range counting problem.

For orthogonal skyline reporting queries, the best bound is $O(n\lg
n/\lg\lg n)$ space with query time $O(\lg n/\lg\lg
n+k)$~\cite{walcom12}, where $k$ is the size of the reported
skyline. Note that an $\Omega(\lg \lg n)$ search term is needed for
skyline range reporting since the $\Omega(\lg \lg n)$ lower bound for
standard range reporting was proved even for the case of determining
whether the query rectangle is empty~\cite{stoc06}.

\begin{table}[t]
\begin{minipage}[t]{7,3cm}
  \centering
  \caption{Previous and new results for skyline counting queries.}
  \label{tab:previous-counting}
  \vspace{2ex}
  \begin{tabular}{ccl}
    Space (words) & Query time & Reference \\
    \hline
    $n\frac{\lg^{2} n}{\lg\lg n}$ & $\frac{\lg^{3/2}n} {\lg\lg n}$ & \cite{walcom12} \\
    $n\lg n$ & $\lg n$ & \cite{iwoca12} \\
    $n \frac{\lg^3 n}{\lg\lg n}$ & $\frac{\lg n}{\lg\lg n}$ & \cite{walcom13} \\ 
    $n$ & $\frac{\lg n}{\lg\lg n}$ & \textbf{New} \\ 
    \hline
  \end{tabular}
\end{minipage}
\hfill
\begin{minipage}[t]{8,5cm}
  \centering
  \noindent \caption{Previous and new results for skyline reporting queries.}
  \label{tab:previous-reporting}
  \vspace{2ex}
  \begin{tabular}{ccl}
    Space (words) & Query time & Reference \\
    \hline
    $n\lg n$ & $\lg^2 n+k$ & \cite{icalp11} (dynamic) \\
    $n\lg n$ & $\lg n+k$ & \cite{cccg11,iwoca12} \\
    $n\frac{\lg n}{\lg\lg n}$ & $\frac{\lg n}{\lg\lg n}+k$ & \cite{walcom12}\\
    $n\lg^{\varepsilon} n$ & $(k+1)\lg\lg n$ &\cite{swat12} \\
    $n\lg^{\varepsilon} n$ & $\frac{\lg n}{\lg\lg n}+k$ & \textbf{New} \\
    $n\lg\lg n$ & $(k+1)(\lg\lg n)^2$ & \cite{swat12} \\
    $n\lg\lg n$ & $\frac{\lg n}{\lg\lg n}+k\lg\lg n$ & \textbf{New} \\
    $n$ & $(k+1)\lg^{\varepsilon} n$ & \cite{swat12} \\
    \hline
  \end{tabular}
\end{minipage}
\end{table}

In \cite{swat12} solutions for the sorted range reporting problem were
presented, i.e.\ the problem of reporting the $k$ leftmost points
within a query rectangle in sorted order of increasing
$x$-coordinate. With space $O(n)$, $O(n\lg\lg n)$ and
$O(n\lg^{\varepsilon} n)$, respectively, query times
$O((k+1)\lg^{\varepsilon} n)$, $O((k+1)(\lg\lg n)^2)$, and $O(k+\lg\lg
n)$ were achieved, respectively. The structures of \cite{swat12}
support finding the rightmost (skyline) point in a query range
($k=1$). By recursing on the rectangle above the reported point one
immediately get the bounds for skyline reporting listed in
Table~\ref{tab:previous-reporting}, where only the linear space
solution achieves query times matching those of general orthogonal
range reporting.

Previous results for skyline queries are summarized in
Tables~\ref{tab:previous-counting} and~\ref{tab:previous-reporting}.

\subsection{Our Results}

In Section~\ref{sec:upper-bound} we present a linear space data
structure supporting orthogonal range skyline counting queries in
$O(\lg n/\lg\lg n)$ time, thus for the first time achieving linear
space and improving over all previous tradeoffs. In
Section~\ref{sec:lower-bound} we show that this is the best possible
by proving a matching lower bound. More specifically, we prove a lower
bound stating that the query time $t$ must satisfy $t = \Omega(\lg
n/\lg(Sw/n))$. Here $S\ge n$ is the space usage in number of words and
$w=\Omega(\lg n)$ is the word size in bits. For $w=\lg^{O(1)} n)$ and
$S=n\lg^{O(1)}n$, this bound becomes $t=\Omega(\lg n/\lg \lg n)$. The
lower bound is proved in the cell probe model of
Yao~\cite{yao:cellprobe}, which is more powerful than the unit cost
RAM and hence the lower bound also applies to RAM data structures.

As a side result, we in Section~\ref{sec:reporting} show how to modify
our counting data structure to support reporting queries. Our
reporting data structure has query time $O(\lg n/\lg \lg n+k)$ and
space usage $O(n \lg^{\varepsilon} n)$. The best previous reporting
structure with a linear term in $k$ has $O(\lg n/\lg \lg n+k)$ query
time but $O(n \lg n/\lg \lg n)$ space~\cite{walcom12}. The reporting
structure can also be modified to achieve $O(\lg n/\lg \lg n + k \lg
\lg n)$ query time and $O(n \lg \lg n)$ space. See
Table~\ref{tab:previous-reporting} for a comparison to previous
results.

Our lower bound follows from a reduction of reachability in butterfly
graphs to two-sided skyline counting queries, extending reductions by
P\v{a}tra\c{s}cu~\cite{patrascu11structures} for two-dimensional
rectangle stabbing and range counting queries.  Our upper bounds are
achieved by constructing a balanced search tree of
degree~$\Theta(\lg^{\varepsilon} n)$ over the points sorted by
$x$-coordinate. At each internal node we store several space efficient
rank-select data structures storing the points in the subtrees sorted
by rank-reduced $y$-coordinates. Using a constant number of global
tables, queries only need to spend $O(1)$ time at each level of the
tree.

\subsection{Preliminaries}

\paragraph{Coordinates.}
If the coordinates of the input and query points are not restricted to
$[n] \times [n]$, but can be arbitrary integers that fit into a
machine word, then we can map the coordinates to the range $[n]$ by
using the RAM dictionary from \cite{stoc99}, which can support
predecessor queries on the lexicographical orderings of the points in
time $O(\sqrt{\lg n/\lg\lg n})$ using $O(n)$ space. This is less than
the $O(\lg n/\lg \lg n)$ query time we are aiming for.

\paragraph{Succinct Data Structures.}
In our solutions, we make extensive use of the following results from
succinct data structures.

\begin{lemma}[\cite{soda02}]
\label{lem:rankselect}
  A vector $X[1..s]$ of $s$ zero-one values, with $t$ values equal to
  one, can be stored in a data structure of size $O(t(1+\lg s/t))$
  bits supporting $\mathrm{rank}$ and $\mathrm{select}$ queries in
  $O(1)$ time. A $\mathrm{rank}(i)$ query returns the number of ones
  in $X[1..i]$, provided $X[i]=1$, whereas a $\mathrm{select}(i)$
  query returns the position of the $i$'th one in~$X$.
\end{lemma}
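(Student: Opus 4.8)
The plan is to prove Lemma~\ref{lem:rankselect} with the standard succinct-structure toolkit: cut $X$ into short blocks, give each block an information-theoretically minimal codeword, bolt on a sparse two-level directory for navigation, and precompute a constant number of global lookup tables so that all per-block work costs $O(1)$. The governing fact is that for $t\ge1$ the target $O(t(1+\lg(s/t)))$ coincides up to constants with $\lg{s\choose t}$, and one splits the parameter range into three regimes because the affordable table size scales with the target and hence with $t$. For \emph{tiny} $t$ (say $t\le\lg^{O(1)}s$) there is nothing to do beyond storing the sorted positions $p_1<\dots<p_t$ packed into $t\lceil\lg s\rceil$ bits; since then $\lg t=O(\lg\lg s)$ while $\lg(s/t)=\Omega(\lg s)$, this is $O(t(1+\lg(s/t)))$ bits, and the two queries are answered by word-level parallelism over the packed array (a constant-depth fusion-tree-style search for $\mathrm{rank}$, a direct indexing for $\mathrm{select}$) with no universe-dependent tables.

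For \emph{dense} $t$ ($t\ge s/\lg s$) I would partition $X$ into $k=s/b$ blocks of $b=\lceil(\lg s)/2\rceil$ bits and replace block $B_i$ by a pair $(c_i,o_i)$, where $c_i=\mathrm{popcount}(B_i)$ is its \emph{class} and $o_i$ is the rank of $B_i$ among all $b$-bit strings of weight $c_i$ in lexicographic order, stored in $\lceil\lg{b\choose c_i}\rceil$ bits. Vandermonde's identity ${s\choose t}=\sum_{c_1+\dots+c_k=t}\prod_j{b\choose c_j}$ gives $\prod_j{b\choose c_j}\le{s\choose t}$ for our actual block composition, so the offsets occupy $\sum_j\lceil\lg{b\choose c_j}\rceil\le\lg{s\choose t}+k=O(t(1+\lg(s/t)))$ bits, using $k=O(s/\lg s)=O(t)$ and $\lg{s\choose t}=O(t\lg(es/t))$ here. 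The class sequence is itself encoded by a recursive application of the construction and contributes only to the lower-order terms discussed below; a classical two-level directory of cumulative popcounts and bit offsets (absolute at superblock boundaries spaced $\Theta(\lg^2 s)$ bits apart, relative within superblocks) costs $o(s)$ bits; and two global tables --- $(c,o,j)\mapsto$ popcount of the length-$j$ prefix of the block decoded from $(c,o)$, and $(c,o,r)\mapsto$ position of the $r$-th one in that block --- have $2^b\lg^{O(1)}s=O(\sqrt s\,\lg^{O(1)}s)$ entries, which is affordable because $t\ge s/\lg s$. Then $\mathrm{rank}(i)$ locates the enclosing block through the directory and finishes with one table lookup, and $\mathrm{select}(i)$ finds the block by a word-parallel search over the packed superblock sub-directory and finishes with one table lookup.

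For \emph{sparse} $t$ ($\lg^{O(1)}s<t<s/\lg s$) I would use an Elias--Fano layout of $p_1<\dots<p_t$: split off the low $\ell=\lceil\lg(s/t)\rceil$ bits of each position, concatenate them into an array $L$ of $t\ell=O(t\lg(s/t))$ bits, and store the non-decreasing sequence of high parts as a bit vector $H$ of length $t+2^{\lceil\lg t\rceil}$ with $t$ ones, which is itself a dense instance handled as above at a cost of $O(t)$ bits. Then $\mathrm{select}(j)$ returns $p_j$ directly (high part from $\mathrm{select}_H(j)$, low part from $L[j]$), while $\mathrm{rank}$ on a member $p=p_j$ uses $H$ to isolate the contiguous range of indices whose ones share $p$'s high part and then must locate $p$'s low part among the sorted, distinct low parts of that range; I would finish this last step by recursing the whole construction on each high-bit bucket --- the residual universe shrinks from $2^{\ell}$ at every level --- bottoming out at a tiny-$t$ base case after $O(1)$ levels, with the global tables sized at the coarsest level so that they still fit inside the $O(t(1+\lg(s/t)))$ budget.

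The main obstacle, and essentially the only subtle point, is exactly that last step: forcing $\mathrm{rank}$ inside a sparse bucket to run in \emph{worst-case} $O(1)$ time without exceeding the space budget. A bucket may contain $\Theta(t)$ elements, so a linear scan is forbidden; a perfect-hash index from low part to local rank would cost, summed over all buckets, $\sum_b m_b\lg m_b$ bits (where $m_b$ denotes the bucket sizes), which can reach $t\lg t$ and hence overshoot the budget once $t\gg\sqrt s$; and a generic predecessor search on $\ell$-bit keys is just the original problem again. The recursive fix works, but it forces one to interleave the recursion with a careful choice of all block and chunk sizes (taken $o(\lg s)$ in the sparse regime, so that every table has $s^{o(1)}$ entries), with the compressed encoding of the class sequences, and with the tiny-$t$ base case, so that the total cost of codewords, directories and tables stays within $O(t(1+\lg(s/t)))$ bits in every regime and glues continuously at the regime boundaries. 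That bookkeeping is routine but genuinely lengthy, and is where a complete proof spends most of its effort.
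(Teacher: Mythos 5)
The paper does not prove Lemma~\ref{lem:rankselect} at all: it is imported verbatim from \cite{soda02}, so there is no in-paper argument to compare against, and the right standard here is whether your sketch matches the construction in that reference. It essentially does --- you are rebuilding the indexable-dictionary variant (note the lemma only requires $\mathrm{rank}(i)$ when $X[i]=1$, i.e.\ partial rank, which you correctly identify as the crux that lets the sparse Elias--Fano regime fit in $O(t(1+\lg (s/t)))$ bits via recursion into buckets), with the usual class/offset blocks, two-level directories and $O(\sqrt{s}\,\lg^{O(1)}s)$-entry global tables in the dense regime; the remaining work you defer (e.g.\ verifying the directory overhead is $O(s\lg\lg s/\lg s)$ rather than merely $o(s)$, so it survives the boundary $t\approx s/\lg s$) is exactly the bookkeeping carried out in \cite{soda02}.
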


\begin{lemma}[\cite{talg07}]
\label{lem:prefixsum}
  Let $X[1..s]$ be a vector of $s$ non-negative integers with total
  sum $t$.  There exists a data structure of size $O(s\lg (2+t/s))$
  bits, supporting the lookup of $X[i]$ and the prefix sum
  $\sum_{j=1}^i X[j]$ in $O(1)$ time, for $i=1,\ldots,s$.
\end{lemma}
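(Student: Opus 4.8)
The plan is to reduce the claim to Lemma~\ref{lem:rankselect} through a unary (``gap'') encoding of $X$. First I would form the bit vector $B$ of length $s+t$ obtained by concatenating, for $i=1,\ldots,s$ in order, a run of $X[i]$ zeros followed by a single one. By construction $B$ has exactly $s$ ones, and immediately after writing the $i$-th one we have emitted $i+\sum_{j=1}^{i}X[j]$ bits; hence the position of the $i$-th one in $B$ satisfies $\select_B(i)=i+\sum_{j=1}^{i}X[j]$.

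I would then store $B$ with the structure of Lemma~\ref{lem:rankselect}, regarding it as a length-$(s+t)$ vector with $s$ ones, which uses $O\big(s(1+\lg((s+t)/s))\big)=O\big(s(1+\lg(1+t/s))\big)$ bits; since $\lg(2+t/s)\ge 1$ we have $1+\lg(1+t/s)\le 2\lg(2+t/s)$, so this is $O(s\lg(2+t/s))$ bits, as required, and the estimate degrades gracefully in the regimes $t=0$ and $t\ll s$ (this is exactly what the additive ``$2$'' inside the logarithm buys). Nothing else needs to be stored.

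Queries are then answered with $O(1)$ calls to $\select$: the prefix sum is $\sum_{j=1}^{i}X[j]=\select_B(i)-i$, and the entry is $X[i]=\select_B(i)-\select_B(i-1)-1$, with the convention $\select_B(0)=0$ handling $i=1$. Each call runs in $O(1)$ time by Lemma~\ref{lem:rankselect}, giving the stated bounds. The construction is elementary, so the only step needing care is the space estimate above; I expect it to be routine rather than a genuine obstacle.
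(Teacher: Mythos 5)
Your reduction is correct. The paper does not actually prove this lemma---it imports it from \cite{talg07} as a black box---so there is no internal proof to compare against, but your argument is the standard (and essentially the intended) one: gap-encode $X$ as a bit vector $B$ of length $s+t$ with $s$ ones, observe $\select_B(i)=i+\sum_{j=1}^i X[j]$, and answer both query types with $O(1)$ calls to $\select$ from Lemma~\ref{lem:rankselect}. The space accounting $O\bigl(s(1+\lg((s+t)/s))\bigr)=O(s\lg(2+t/s))$ is right, including the degenerate case $t=0$, so nothing is missing.
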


\begin{lemma}[\cite{jda07sadakane,latin10fischer}]
\label{lem:rmq}
  Let $X[1..s]$ be a vector of integers. There exists a data structure
  of size $O(s)$ bits supporting range-maximum-queries in $O(1)$ time,
  i.e.\ given $i$ and $j$, $1\leq i\leq j\leq s$, reports the index
  $k$, $i\leq k\leq j$, such that $X[k]=\max(X[i..j])$. Queries only
  access this data structure, i.e.\ the vector $X$ is not stored.
\end{lemma}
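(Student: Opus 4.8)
The plan is to reduce range maximum queries on $X$ to lowest-common-ancestor (LCA) queries on the \emph{Cartesian tree} of $X$, and then to store a succinct encoding of this tree so that queries can be answered in $O(1)$ time without ever inspecting $X$ itself.

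Recall that the Cartesian tree $C$ of $X[1..s]$ has as its root the position of the maximum of $X[1..s]$, with ties broken by always taking the leftmost maximal position; its left subtree is the Cartesian tree of $X[1..k-1]$ and its right subtree the Cartesian tree of $X[k+1..s]$, where $k$ is the chosen root. The standard fact I would use is that, under this fixed tie-breaking rule, $\RMQ(i,j)$ equals the position labelling the LCA of the nodes labelled $i$ and $j$ in $C$, and that this position is indeed a valid answer. Thus it suffices to support LCA queries on $C$ together with translation between array indices and tree nodes.

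Since $C$ has exactly $s$ nodes, it has fewer than $4^s$ possible shapes, so it can be stored in $2s+O(1)$ bits by a balanced-parenthesis (Euler-tour) encoding of the tree, plus an $o(s)$-bit index. On such an encoding the LCA of two nodes is computed by a range minimum query on the sequence of \emph{excess values} (nesting depths) of the parenthesis string, a query in which consecutive entries differ by exactly $\pm1$. The remaining work is therefore to support this $\pm1$RMQ on a length-$2s$ sequence in $O(1)$ time with $O(s)$ bits, without storing the sequence. I would use the classical two-level block decomposition: split the sequence into blocks of length $b=\frac{1}{4}\lg s$; because neighbouring entries differ by $\pm1$, a block is determined by its first entry together with a $\pm1$-pattern of length $b-1$, so there are only $O(\sqrt{s})$ distinct patterns and a precomputed table of size $o(s)$ bits answers any within-block query; over the $O(s/\lg s)$ block minima build a sparse table answering span queries in $O(1)$ time using $O((s/\lg s)\cdot\lg(s/\lg s))=O(s)$ bits. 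A general $\pm1$RMQ splits into $O(1)$ within-block queries and one sparse-table query. Assembling everything: given $i$ and $j$, map them to parenthesis positions by $\select$ on the parenthesis bitvector, run the $\pm1$RMQ to locate the LCA, and map back to an array index by $\mathrm{rank}$ (all in $O(1)$ time, cf.\ Lemma~\ref{lem:rankselect}); none of the stored structures references $X$, and the total is $O(s)$ bits.

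The main obstacle is getting the reduction exactly right when $X$ has equal entries: the tie-breaking rule in the Cartesian tree, the labelling of its nodes, and the direction in which the excess-sequence RMQ is resolved must all be chosen consistently so that the LCA is provably a correct RMQ answer. On the space side, the only things to verify carefully are that the Four-Russians precomputation tables are genuinely $o(s)$ bits and that the LCA-to-$\pm1$RMQ reduction can be carried out purely on the $2s+o(s)$-bit tree encoding; both are standard but deserve a check.
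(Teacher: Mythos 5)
The paper does not prove this lemma; it is imported verbatim from the cited references (Sadakane's succinct RMQ structure and Fischer's follow-up), so there is no in-paper proof to compare against. Your proposal reconstructs the standard route from those references --- Cartesian tree, reduction of RMQ to LCA, balanced-parenthesis encoding, and $\pm 1$RMQ on the excess sequence --- and the overall architecture is sound. (For the record, the Fischer reference actually avoids the LCA detour by querying a DFUDS-style encoding of a ``2d-min-heap'' directly, which is how the constant in front of $s$ is brought down to $2$; your Sadakane-style route is fine for the $O(s)$-bit claim here.)

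There is, however, one concrete quantitative gap: your space accounting for the sparse table over the block minima counts \emph{entries}, not \emph{bits}. A sparse table over $m=\Theta(s/\lg s)$ blocks stores $\Theta(m\lg m)$ entries, each an index of $\Theta(\lg m)$ bits, for a total of $\Theta(m\lg^2 m)=\Theta(s\lg s)$ bits --- which destroys the $O(s)$-bit bound you are trying to prove. The standard repair is one more level of indirection: group the blocks into superblocks of size $\Theta(\lg^2 s)$, build the sparse table only over the $\Theta(s/\lg^2 s)$ superblock minima (now $O(s)$ bits, in fact $o(s)$ with slightly larger superblocks), and add small per-superblock tables for the blocks inside each superblock, which cost $o(s)$ bits in total; a query then decomposes into $O(1)$ lookups across the three levels. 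Separately, you should be explicit about how array index $i$ is matched to its node in the parenthesis string: the natural correspondence for a Cartesian tree is \emph{in-order}, not preorder, and supporting in-order rank/select on a BP encoding is exactly the point where Sadakane pays extra (he pads the tree with leaves, landing at $4s+o(s)$ bits) and where Fischer's DFUDS-based encoding is cleaner. Both issues are repairable by standard means, but as written the space bound does not follow.
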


\section{Lower Bound}
\label{sec:lower-bound}

That an orthogonal range skyline counting data structure requires
space $\Omega(n\lg n)$ bits, follows immediately since each of the
$n!$ different input point sets of size~$n$, where points have
distinct $x$- and $y$-coordinates from~$[n]$, can be reconstructed
using query rectangles considering each possible point in $[n]^2$
independently, i.e.\ the space usage is at least $\ceil{\lg_2
  (n!)}=\Omega(n\lg n)$ bits.

In the remainder of this section, we prove that any data structure
using $S\ge n$ words of space must have query time $t = \Omega(\lg
n/\lg(Sw/n))$, where $w = \Omega(\lg n)$ denotes the word size in
bits. In particular for $w=\lg^{O(1)} n$, this implies that any data
structure using $n \lg^{O(1)}n$ space must have query time
$t=\Omega(\lg n/\lg \lg n)$, showing that our data structure from
Section~\ref{sec:upper-bound} is optimal. Our lower bound holds even
for data structures only supporting skyline counting queries inside
$2$-sided rectangles, i.e.\ query rectangles of the form
$(-\infty,x]\times(-\infty,y]$.  The lower bound is proved in the cell
    probe model of Yao~\cite{yao:cellprobe} with word size $w =
    \Omega(\lg n)$. Since we derive our lower bound by reduction, we
    will not spend time on introducing the cell probe model, but
    merely note that lower bounds proved in this model applies to data
    structures developed in the unit cost RAM model. See
    e.g.~\cite{stoc07} for a brief description of the cell probe
    model.

\paragraph{Reachability in the Butterfly Graph.}

We prove our lower bound by reduction from the problem known as
\emph{reachability oracles in the butterfly
  graph}~\cite{patrascu11structures}. A butterfly graph of degree $B$
and depth $d$ is a directed graph with $d+1$ layers, each having $B^d$
nodes ordered from left to right (see Figure~\ref{fig:butterfly}). The
nodes at level~$0$ are the \emph{sources} and the nodes at level $d$
are the \emph{sinks}. Each node, except the sinks, has out-degree $B$,
and each node, except the sources, has in-degree~$B$.

If we number the nodes at each level with $0,\dots,B^d-1$ from left to
right and interpret each index $i \in [B^d]$ as a vector
$v(i)=v(i)[d-1]\cdots v(i)[0] \in [B]^d$ (just write $i$ in base $B$),
then the node at index $i$ at layer $k \in [d]$ has an out-going edge
to each node $j$ at layer $k+1$ for which $v(j)$ and $v(i)$ differ
only in the $k$'th coordinate. Here the $0$'th coordinate is the
coordinate corresponding to the least significant digit when thinking
of $v(i)$ and $v(j)$ as numbers written in base $B$ (again see
Figure~\ref{fig:butterfly}). Observe that there is precisely one
directed path between each source-sink pair. For the $s$'th source and
the $t$'th sink, this path corresponds to ``morphing'' one digit of
$v(s)$ into the corresponding digit in $v(t)$ for each layer traversed
in the butterfly graph.

The input to the problem of reachability oracles in the butterfly
graph, with degree $B$ and depth $d$, is a subset of the edges of the
butterfly graph, i.e.\ we are given a subgraph~$G$ of the butterfly as
input. A query is specified by a source-sink pair~$(s,t)$ and the goal
is to return whether there exists a directed path from the given
source~$t$ to the given sink~$t$ in $G$.

\begin{figure}
  \centering
  \input{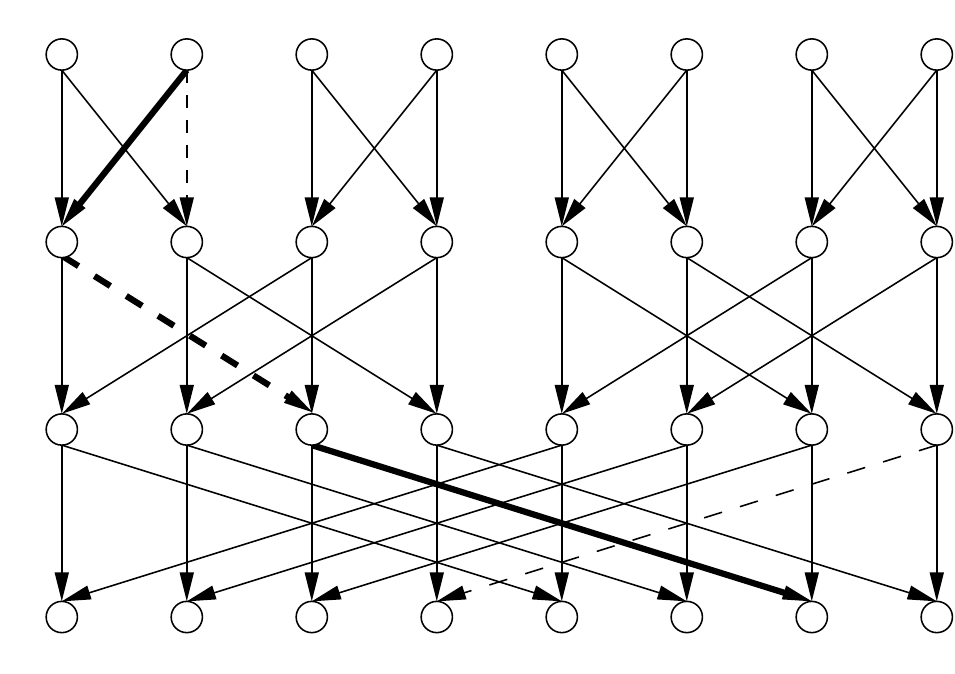_t}  
  \caption{A butterfly with degree $B=2$ and depth $d=3$. The path
    shown in \textbf{bold} is the unique path from the source~$s=001$
    to the sink~$t=110$. A concrete input to the \emph{reachability
      oracles in the butterfly graph} problem consists of a subset of
    the edges of the butterfly. An example input is obtained by
    deleting the dashed edges labelled $a,b$ and $c$. For that input,
    there is no path from the source~$s$ to the sink~$t$ since the
    edge $b$ is not part of the input.}
  \label{fig:butterfly}
\end{figure}

P{\v a}tra{\c s}cu proved the following lower bound for reachability
oracles:

\begin{theorem}[P{\v a}tra{\c s}cu~\cite{patrascu11structures}, Section 5]
\label{thm:butterfly}
  Any cell probe data structure answering reachability queries in
  subgraphs of the butterfly graph with degree $B$ and depth $d$,
  having space usage $S$ words of $w$ 
  bits, must have
  query time $t=\Omega(d)$, provided $B = \Omega(w^2)$ and $\lg B =
  \Omega(\lg Sd/N)$. Here $N$ denotes the number of non-sink nodes in
  the butterfly graph.
\end{theorem}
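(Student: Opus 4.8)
The plan is to prove the contrapositive via the standard cell-probe-to-asymmetric-communication reduction together with the self-similar structure of the butterfly. First I would set up a communication game in which Alice receives the query pair $(s,t)$ and Bob receives the input subgraph $G$. A cell-probe data structure of space $S$ words and query time $t$ yields a protocol in which, over $t$ rounds, Alice sends in total at most $t\lceil\lg S\rceil$ bits (the addresses of the probed cells, chosen adaptively from what Bob has already revealed) and Bob sends in total at most $tw$ bits (the contents of those cells), after which Alice knows the answer. Hence it suffices to show that any protocol of this form needs $\Omega(d)$ rounds in the stated parameter range.

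Next I would exploit the recursive decomposition of the butterfly: because the edges leaving layer~$k$ change only the $k$-th base-$B$ digit, the layers $0,\dots,d-1$ of $\mathcal{B}_d$ split, after the first layer, into $B$ vertex-disjoint copies of $\mathcal{B}_{d-1}$, one per value of the lowest digit; more generally the bottom half of the layers forms a disjoint union of $B^{d/2}$ copies of $\mathcal{B}_{d/2}$ indexed by the high digits and the top half a disjoint union of $B^{d/2}$ copies of $\mathcal{B}_{d/2}$ indexed by the low digits, and a source--sink pair is reachable in $\mathcal{B}_d$ exactly when the two induced sub-queries are reachable in the relevant copies. Iterating, a query to $\mathcal{B}_d$ is a genuine direct sum (a composition) of $d$ one-layer sub-queries, each touching an essentially independent piece of $G$, whose conjunction is the answer --- equivalently, writing reachability as the disjointness of the set of edges on the query path with the set of deleted edges, each layer plays the role of one small lopsided set-disjointness instance and the recursion realizes the direct sum.

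Then I would run a round-elimination argument on this structure: each round of the protocol can be peeled off at the cost of descending a constant number of levels of the recursion while keeping the protocol's error bounded, so a $t$-round protocol collapses to a $0$-round protocol for an instance of depth $d-O(t)$; if $t=o(d)$ this is absurd, since with no communication at all Alice cannot tell whether her one remaining edge lies in Bob's graph. For the error to stay bounded across all $\Theta(d)$ eliminations the per-round messages must be small relative to the branching factor $B$ at each level, and this is precisely what the two hypotheses provide: $\lg B=\Omega(\lg(Sd/N))$, which with $N=dB^d$ amounts to $S\le B^{d+O(1)}$, controls Alice's $\lceil\lg S\rceil$ address bits per round, while $B=\Omega(w^2)$ controls Bob's $w$ content bits per round, so that in each step a single round is too weak to settle even one layer on its own. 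Summing an appropriate potential over the $t$ rounds (equivalently, iterating the elimination $\Theta(d)$ times) then yields $t=\Omega(d)$, and feeding this back through the cell-probe reduction gives the theorem.

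I expect the round-elimination step to be the main obstacle: making rigorous that a single round ``commits'' to only $O(1)$ recursion levels, and running the induction so that Alice's $\lceil\lg S\rceil$ address bits and Bob's $w$ content bits are each charged against the correct budget --- this is exactly the bookkeeping that forces the hypotheses $B=\Omega(w^2)$ and $\lg B=\Omega(\lg(Sd/N))$, and getting the error not to blow up through $\Theta(d)$ iterations is delicate. An alternative that avoids an explicit round-elimination is to reduce directly from a direct sum of Lopsided Set Disjointness instances, using the butterfly as the composition gadget described above, and then invoke the known asymmetric-communication lower bound for (direct sums of) LSD; the parameter restrictions come out the same way, and this is the route taken in \cite{patrascu11structures}.
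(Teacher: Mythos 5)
First, a point of comparison that matters more than any technical detail: the paper does not prove this theorem. It is imported verbatim from P\v{a}tra\c{s}cu~\cite{patrascu11structures} (Section~5) and used as a black box in the reduction of Section~\ref{sec:lower-bound}, so there is no proof of the authors' own against which to measure yours. What can be judged is whether your sketch faithfully reconstructs the cited proof and whether it would go through as written. Your hedged alternative at the end --- reduce from Lopsided Set Disjointness, with Bob's set encoding the deleted edges and Alice's set encoding the edges on the paths of a batch of edge-disjoint source--sink queries, then invoke the asymmetric communication lower bound for LSD --- is indeed the argument of \cite{patrascu11structures}, and your reading of the hypotheses is essentially right: $\lg B=\Omega(\lg(Sd/N))$ is what keeps Alice's total address communication, roughly $Qt\lg(Sd/(Nt))$ bits over $Q=N/d$ batched queries, below the $\Omega(N\lg B)$ threshold unless $t=\Omega(d)$, and $B=\Omega(w^2)$ does the same for Bob's $Qtw$ bits against the $\Omega(NB^{1-\delta})$ threshold. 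One correction: it is a \emph{single} LSD instance answered by $N/d$ parallel queries to one data structure, not a direct sum of LSD instances composed through the butterfly; the batching is essential, because it is only by amortizing the cell addresses over all $N/d$ queries that Alice's communication becomes small enough for the counting to work.

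Your primary route, round elimination on the recursive structure, contains the genuine gap, and you have half-identified it yourself. The step ``each round can be peeled off at the cost of descending a constant number of levels'' is asserted, not argued, and in the parameter regime where the theorem is actually applied ($\lg B=\Theta(\lg\lg n)$, $d=\Theta(\lg n/\lg\lg n)$, per-round messages of $\lceil\lg S\rceil\ge d\lg B$ and $w$ bits) the standard round-elimination lemma accumulates error of order $\sqrt{m/B}$ per elimination; over $\Theta(d)$ eliminations this requires roughly $d^3\lg B=o(B)$, which the hypothesis $B=\Omega(w^2)$ does \emph{not} guarantee. So the error blow-up you flag as ``delicate'' is not a bookkeeping nuisance but an actual obstruction to the iterated-elimination strategy, and it is precisely why the published proof replaces $d$ rounds of elimination with a single one-shot communication bound for a batched query set. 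As a standalone proof your primary argument would therefore fail; as a pointer to the correct proof, your final paragraph is accurate.
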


We derive our lower bound by showing that any cell probe data
structure for skyline range counting can be used to answer
reachability queries in subgraphs of the butterfly graph for any
degree $B$ and depth $d$.

\paragraph{Edges to 2-d Rectangles.}

Consider the butterfly graph with degree $B$ and depth $d$. The first
step of our reduction is inspired by the reduction P{\v a}tra{\c s}cu
used to obtain a lower bound for $2$-d rectangle stabbing: Consider an
edge of the butterfly graph, leaving the $i$'th node at layer $k \in
[d]$ and entering the $j$'th node in layer $k+1$. We denote this edge
$e_k(i,j)$. The source-sink pairs~$(s,t)$ that are connected through
$e_k(i,j)$ are those for which:
\begin{enumerate}
\item The source has an index $s$ satisfying $v(s)[h] = v(i)[h]$ for
  $h \geq k$, i.e.\ $s$ and $i$ agree on the $d-k$ most significant
  digits when written in base $B$.
\item The sink has an index $t$ satisfying $v(t)[h] = v(j)[h]$ for $h
  \leq k+1$, i.e.\ $t$ and $j$ agree on the $k+1$ least significant
  digits when written in base $B$.
\end{enumerate}
We now map each edge $e_k(i,j)$ of the butterfly graph to a rectangle
in $2$-d. For the edge $e_k(i,j)$, we create the rectangle
$r_k(i,j)=[x_1,x_2] \times [y_1,y_2]$ where:
\begin{itemize}
\item $x_1 = v(i)[d-1]v(i)[d-2]\cdots v(i)[k]0\cdots 0$ when
  written in base $B$,
\item $x_2 = v(i)[d-1]v(i)[d-2]\cdots v(i)[k](B-1)\cdots (B-1)$ when
  written in base $B$,
\item $y_1 = v(j)[0]v(j)[1]\cdots v(j)[k+1]0 \cdots 0$ when written in
  base $B$, and
\item $y_2 = v(j)[0]v(j)[1]\cdots v(j)[k+1](B-1) \cdots (B-1)$ when written in
  base $B$.
\end{itemize}
The crucial observation is that for a source-sink pair, where the
source is the $s$'th source and the sink is the $t$'th sink, the edges
on the path from the source to the sink in the butterfly graph are
precisely those edges $e_k(i,j)$ for which the corresponding
rectangle $r_k(i,j)$ contains the point $(s,\rev_B(t))$, where
$\rev_B(t)$ is the number obtained by writing $t$ in base $B$ and then
reversing the digits.

We now collect the set of rectangles $\Rects$, containing each
rectangle $r_k(i,j)$ corresponding to an edge of the butterfly
graph. Given an input subgraph $G$, we \emph{mark} all rectangles
$r_k(i,j) \in \Rects$ for which the corresponding edge $e_k(i,j)$ is
also in $G$. It follows that there is a directed path from the $s$'th
source to the $t$'th sink in the subgraph $G$ if and only if
$(s,\rev_B(t))$ is not contained in any \emph{unmarked} rectangle in
$\Rects$.

Our goal is now to transform marked and unmarked rectangles to points,
such that we can use a skyline counting data structure to determine
whether a given point $(s,\rev_B(t))$ is contained in an unmarked
rectangle. Note that our reduction only works for the rectangle set
$\Rects$ obtained from the butterfly graph, and not for any set of
rectangles, i.e.\ we could not have reduced from the general problem
of 2-d rectangle stabbing.

\paragraph{2-d Rectangles to Points.}
To avoid tedious details, we from this point on allow the input to
skyline queries to have multiple points with the same $x$- or
$y$-coordinate (though not two points with both coordinates
identical). This assumption can easily be removed, but it would only
distract the reader from the main ideas of our reduction. We still use
the definition that a point $(x,y)$ dominates a point $(x',y')$ if and
only if $x' \leq x$ and $y' \leq y$.

The next step of the reduction is to map the rectangles $\Rects$ to a
set of points. For this, we first transform the coordinates slightly:
For every rectangle $r_k(i,j) \in \Rects$, having coordinates $[x_1 ,
  x_2 ] \times [y_1 , y_2]$, we modify each of the coordinates in the
following way:
\begin{itemize}
\item $x_1 \gets dx_1+(d-1-k)$,
\item $x_2 \gets dx_2+d-1$,
\item $y_1 \gets dy_1+k$, and
\item $y_2 \gets dy_2+d-1$.
\end{itemize}
The multiplication with $d$ essentially corresponds to expanding each
point with integer coordinates to a $d \times d$ grid of points. The
purpose of adding $k$ to $y_1$ and $(d-1-k)$ to $x_1$ is to ensure
that, if two rectangles share a lower-left corner (only possible for
two rectangles $r_k(i,j)$ and $r_{k'}(i',j')$ where $k \neq k'$), then
those corners do not dominate each other in the transformed set of
rectangles. We will see later that the particular placement of the
points based on $k$ also plays a key role. We use $\pi : [B^d]^4 \to
[dB^d]^4$ to denote the above map. With this notation, the transformed
set of rectangles is denoted $\pi(\Rects)$ and each rectangle
$r_k(i,j) \in \Rects$ is mapped to $\pi(r_k(i,j)) \in \pi(\Rects)$.

We now create the set of points $P'$ containing the set of lower-left
corner points for all rectangles $\pi(r_k(i,j)) \in \pi(\Rects)$,
i.e.\ for each $\pi(r_k(i,j))=[x_1,x_2] \times [y_1, y_2]$, we add the
point $(x_1,y_1)$ to $P'$. See Figure~\ref{fig:rects} for an
example. The set $P'$ has the following crucial property:

\begin{figure}[t]
  \centering
  \input{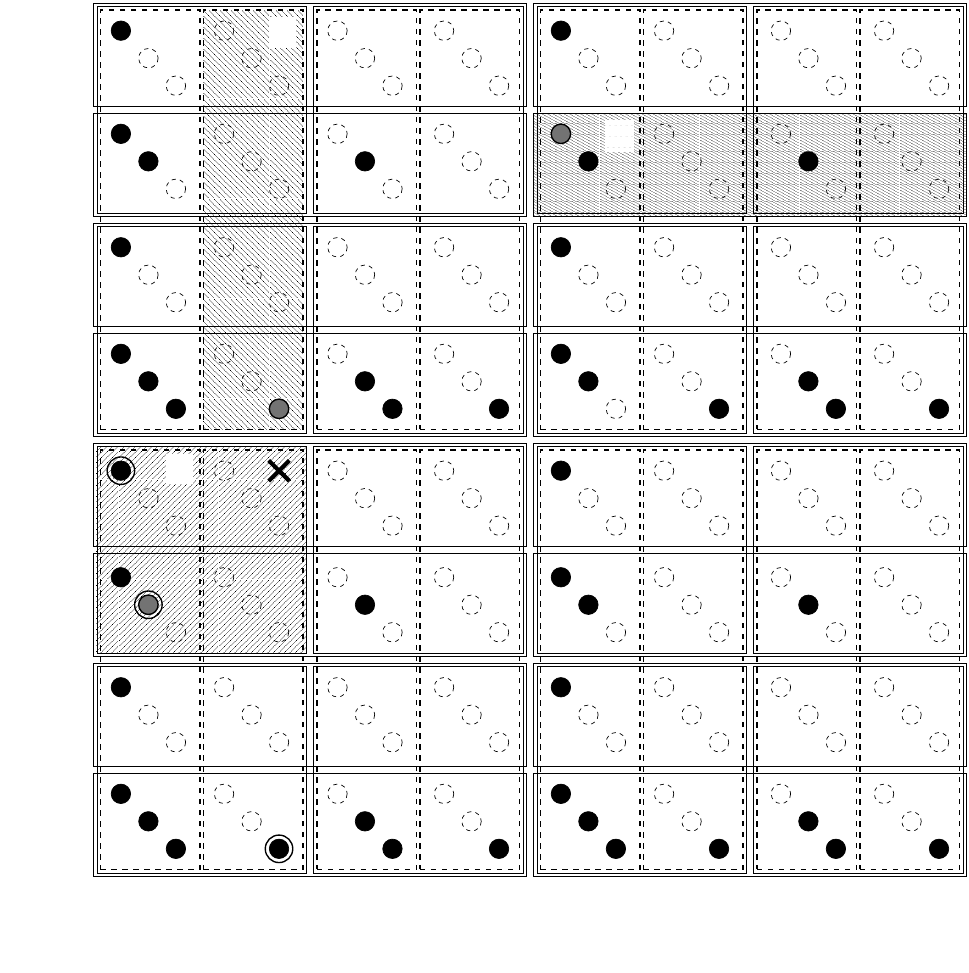_t}  
  \caption{The butterfly with degree $B=2$ and depth $d=3$ from
    Figure~\ref{fig:butterfly} translated to a set of rectangles. The
    dashed rectangles correspond to the edges $a,b$ and $c$ from
    Figure~\ref{fig:butterfly}. Every grid point is replaced by up to
    $d$ points placed on a diagonal. Each rectangle obtained from an
    edge of the butterfly graph produces one point on the diagonal
    corresponding to the rectangle's lower left corner. The points
    corresponding to the rectangles obtained from edges $a,b$ and $c$
    are shown in gray. The query corresponding to the source~$s=001$
    and the sink~$t=110$ in Figure~\ref{fig:butterfly} is translated
    to the two-sided skyline query rectangle with its upper right
    corner at the $\times$. The double circled points are the points
    on the skyline of the query range and these correspond exactly to
    the lower left corners of the rectangles containing the $\times$.}
  \label{fig:rects}
\end{figure}

\begin{lemma}
\label{lem:sky}
Let $(x,y)$ be a point with coordinates in $[B^d] \times [B^d]$. Then
for the two-sided query rectangle $Q =(-\infty,dx+d-1] \times
  (-\infty,dy+d-1]$, it holds that $\Skyline(Q \cap P')$ contains
    precisely the points in~$P'$ corresponding to the lower-left
    corners of the rectangles $\pi(r_k(i,j)) \in \pi(\Rects)$ for which
    $r_k(i,j)$ contains~$(x,y)$.
\end{lemma}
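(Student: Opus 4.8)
The plan is to prove the claimed equivalence one point at a time. For a rectangle $r=r_k(i,j)\in\Rects$, write $x_1(r),x_2(r),y_1(r),y_2(r)$ for its four coordinates and let $p(r)=(d\,x_1(r)+(d-1-k),\,d\,y_1(r)+k)$ be the point of $P'$ it contributes (the lower-left corner of $\pi(r)$); I will show that $p(r)\in\Skyline(Q\cap P')$ if and only if $r$ contains $(x,y)$. The argument rests on two observations. First, since the offsets $d-1-k$ and $k$ that $\pi$ adds both lie in $\{0,\dots,d-1\}$, for any $r\in\Rects$ the point $p(r)$ lies in $Q$ if and only if $x_1(r)\le x$ and $y_1(r)\le y$: indeed $d\,x_1(r)+(d-1-k)\le dx+d-1$ is equivalent to $d(x_1(r)-x)\le k$, and as $0\le k<d$ and $x_1(r),x$ are integers this holds exactly when $x_1(r)\le x$; the $y$-coordinate is symmetric. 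Second, the butterfly rectangles have a rigid block structure: $[x_1(r),x_2(r)]$ is a run of $B^k$ consecutive integers aligned to a multiple of $B^k$, $[y_1(r),y_2(r)]$ is a run of $B^{d-1-k}$ consecutive integers aligned to a multiple of $B^{d-1-k}$, and --- by the ``crucial observation'' already established in the reduction, applied with source index $x$ and sink index $\rev_B^{-1}(y)$ (legitimate since $\rev_B$ is a bijection of $[B^d]$) --- for every $(x,y)\in[B^d]^2$ and every layer $k$ there is exactly one rectangle of $\Rects$ at layer $k$ containing $(x,y)$. Consequently, if $r=r_k(i,j)$ contains $(x,y)$ then $x_1(r)$ is the largest multiple of $B^k$ that is $\le x$ and $y_1(r)$ is the largest multiple of $B^{d-1-k}$ that is $\le y$.

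\emph{If $r$ contains $(x,y)$ then $p(r)\in\Skyline(Q\cap P')$.} By the first observation $p(r)\in Q\cap P'$, so it suffices to show that no rectangle $r'=r_{k'}(i',j')\ne r$ has $p(r')\in Q$ dominating $p(r)$. Suppose one does. Comparing first coordinates of $p(r')$ and $p(r)$, and using that their offsets differ by less than $d$, gives $x_1(r')\ge x_1(r)$ with equality only if $k'\le k$; symmetrically $y_1(r')\ge y_1(r)$ with equality only if $k'\ge k$; and $p(r')\in Q$ gives $x_1(r')\le x$ and $y_1(r')\le y$. Now split on how $k'$ compares with $k$. If $k'>k$ then $x_1(r')$ is a multiple of $B^{k'}$, hence of $B^k$, and is $\le x$, so it cannot exceed $x_1(r)$, the largest multiple of $B^k$ that is $\le x$; with $x_1(r')\ge x_1(r)$ this forces $x_1(r')=x_1(r)$, hence $k'\le k$, a contradiction. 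If $k'<k$ the mirror-image argument on the second coordinate (using $B^{d-1-k}\mid B^{d-1-k'}$) forces $y_1(r')=y_1(r)$, hence $k'\ge k$, again a contradiction. If $k'=k$ both comparisons collapse to equalities, so $p(r')=p(r)$; since the points of $P'$ are pairwise distinct this means $r'=r$, contradicting $r'\ne r$. Hence $p(r)$ is maximal in $Q\cap P'$.

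\emph{If $r$ does not contain $(x,y)$ then $p(r)\notin\Skyline(Q\cap P')$.} If $p(r)\notin Q$ this is immediate, so by the first observation assume $x_1(r)\le x$ and $y_1(r)\le y$ but, say, $x>x_2(r)$ (the case $y>y_2(r)$ is symmetric, swapping the two coordinates and the two block sizes). Then $x$ lies in a strictly later layer-$k$ block than $r$'s, i.e.\ the largest multiple of $B^k$ that is $\le x$ exceeds $x_1(r)$. Let $r''$ be the unique rectangle of $\Rects$ at layer $k$ containing the point $(x,y_1(r))\in[B^d]^2$; then $x_1(r'')$ equals that larger multiple, so $x_1(r'')>x_1(r)$, while $y_1(r'')$ is the largest multiple of $B^{d-1-k}$ that is $\le y_1(r)$, which is $y_1(r)$ itself. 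Because $r''$ lies at layer $k$, $p(r'')$ is formed from $(x_1(r''),y_1(r''))$ with the same offsets as $p(r)$; hence $p(r'')$ dominates $p(r)$ and $p(r'')\ne p(r)$, and $p(r'')\in Q$ since $x_1(r'')\le x$ and $y_1(r'')=y_1(r)\le y$. So $p(r)$ is not maximal in $Q\cap P'$.

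I expect the case analysis in the first direction to be the crux: one must notice that the $k$-dependent shifts of $\pi$ are small enough (strictly below the scaling factor $d$) that domination among transformed corners is decided by the coarse comparison of the original lower-left corners, and then invoke the divisibility structure of the butterfly rectangles to pin the only potential dominator of $p(r)$ down to $p(r)$ itself. The remaining ingredients --- membership in $Q$, the choice of witness $r''$ in the converse, and distinctness of the points of $P'$ --- are routine once this is in place.
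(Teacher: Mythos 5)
Your proof is correct and follows essentially the same route as the paper's: both establish membership in $Q$ via the same offset computation, rule out dominators by a case analysis on $k'$ versus $k$ using the aligned-block structure of the butterfly rectangles (you phrase this as divisibility by $B^k$ and "largest multiple $\le x$" where the paper compares digit strings and argues containment in $\pi(r_k(i,j))$ — the same fact), and in the converse direction exhibit the unique layer-$k$ rectangle through a suitable point as the dominating witness. The only cosmetic discrepancy is the stated $y$-block length ($B^{d-1-k}$ versus the $B^{d-k-2}$ implied by the paper's literal formula for $y_1$), which affects neither argument since all that is used is that the blocks are aligned and their length is monotone in $k$.
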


\begin{proof}
First let $p=(x_1,y_1) \in P'$ be the lower-left corner of a rectangle
$\pi(r_k(i,j))$ such that $r_k(i,j)$ contains the point $(x,y)$. We
want to show that $p \in \Skyline(Q \cap P')$. Since $r_k(i,j)$
contains the point $(x,y)$, we have $x \geq \lfloor x_1/d \rfloor$ and
$y \geq \lfloor y_1/d\rfloor$. From this, we get $dx+d-1 \geq d\lfloor
x_1/d \rfloor + (d-1-k) = x_1$ and $dy+d-1 \geq d \lfloor y_1/d
\rfloor + k = y_1$, i.e.\ $p$ is inside $Q$. Since $(x,y)$ is inside
$r_k(i,j)$, we also have that $(dx+d-1,dy+d-1)$ is dominated by the
upper-right corner of $\pi(r_k(i,j))$, i.e.\ $(dx+d-1,dy+d-1)$ is
inside $\pi(r_k(i,j))$.

What remains to be shown is that no other point in $Q \cap P'$
dominates $p$. For this, assume for contradiction that some point
$p'=(x_1',y_1') \in P'$ is both in $Q$ and also dominates $p$. First,
since $p'$ is dominated by $(dx+d-1,dy+d-1)$ and also dominates $p$,
we know that $p'$ must be inside $\pi(r_k(i,j))$. Now let
$\pi(r_{k'}(i',j')) \neq \pi(r_k(i,j))$ be the rectangle in
$\pi(\Rects)$ from which $p'$ was generated, i.e.\ $p'$ is the
lower-left corner of $\pi(r_{k'}(i',j'))$. We have three cases:
\begin{enumerate}
\item First, if $k'=k$ we immediately get a contradiction since the
  rectangles $\pi(\Rects)_k = \{\pi(r_{k'}(i',j')) \in \pi(\Rects)
  \mid k' = k\}$ are pairwise disjoint and hence $p'$ could not have
  been inside $\pi(r_k(i,j))$.
\item If $k' < k$, we know that $\pi(r_{k'}(i',j'))$ is shorter in
  $x$-direction and longer in $y$-direction than $\pi(r_k(i,j))$. From
  our transformation, we know that $(y_1 \bmod d) = k$ and $(y'_1 \bmod
  d) = k' < k$. Thus since $p'$ dominates $p$, we must have $\lfloor
  y_1' / d \rfloor > \lfloor y_1/d \rfloor$. But these two values are
  precisely the $y$-coordinates of the lower-left corners of
  $r_k(i,j)$ and $r_{k'}(i',j')$. By definition, we get:
\[
v(j')[0]v(j')[1]\cdots v(j')[k'+1]0 \cdots 0 > v(j)[0]v(j)[1] \cdots
  v(j)[k+1]0 \cdots 0\;.
\]
Since $k'<k$, this furthermore gives us
\[
v(j')[0]v(j')[1]\cdots v(j')[k'+1] > v(j)[0]v(j)[1] \cdots
  v(j)[k'+1]\;.
\]
From this it follows that
\[
v(j')[0]v(j')[1]\cdots v(j')[k'+1]0 \cdots 0 > v(j)[0]v(j)[1] \cdots
  v(j)[k+1](B-1) \cdots (B-1)\;,
\]
i.e.\ the lower-left corner of $r_{k'}(i',j')$ is outside $r_k(i,j)$,
which also implies that the lower-left corner of $\pi(r_{k'}(i',j'))$
is outside $\pi(r_{k}(i,j))$. That is, $p'$ is outside $\pi(r_k(i,j))$,
which gives the contradiction.

\item The case for $k'>k$ is symmetric to the case $k'<k$, just using
  the $x$-coordinates instead of the $y$-coordinates to derive the
  contradiction.
\end{enumerate}
The last step of the proof is to show that no point $p=(x_1,y_1) \in
P'$ can be in $\Skyline(Q \cap P')$ but at the same time correspond to
the lower-left corner of a rectangle $\pi(r_k(i,j))$ where $r_k(i,j)$
does not contains the point $(x,y)$. First observe that
$(dx+d-1,dy+d-1)$ is contained in precisely one rectangle
$\pi(r_{k'}(i',j'))$ for each value of $k' \in [d]$. Now let
$\pi(r_{k}(i',j')) \neq \pi(r_{k}(i,j))$ be the rectangle containing
$(dx+d-1,dy+d-1)$ amongst the rectangles $\pi(\Rects)_k$. The
lower-left corner of this rectangle is dominated by $(dx+d-1,dy+d-1)$
but also dominates $p$, hence $p$ is not in $\Skyline(Q \cap P')$.
\end{proof}

\begin{figure}[t]
  \centering
  \input{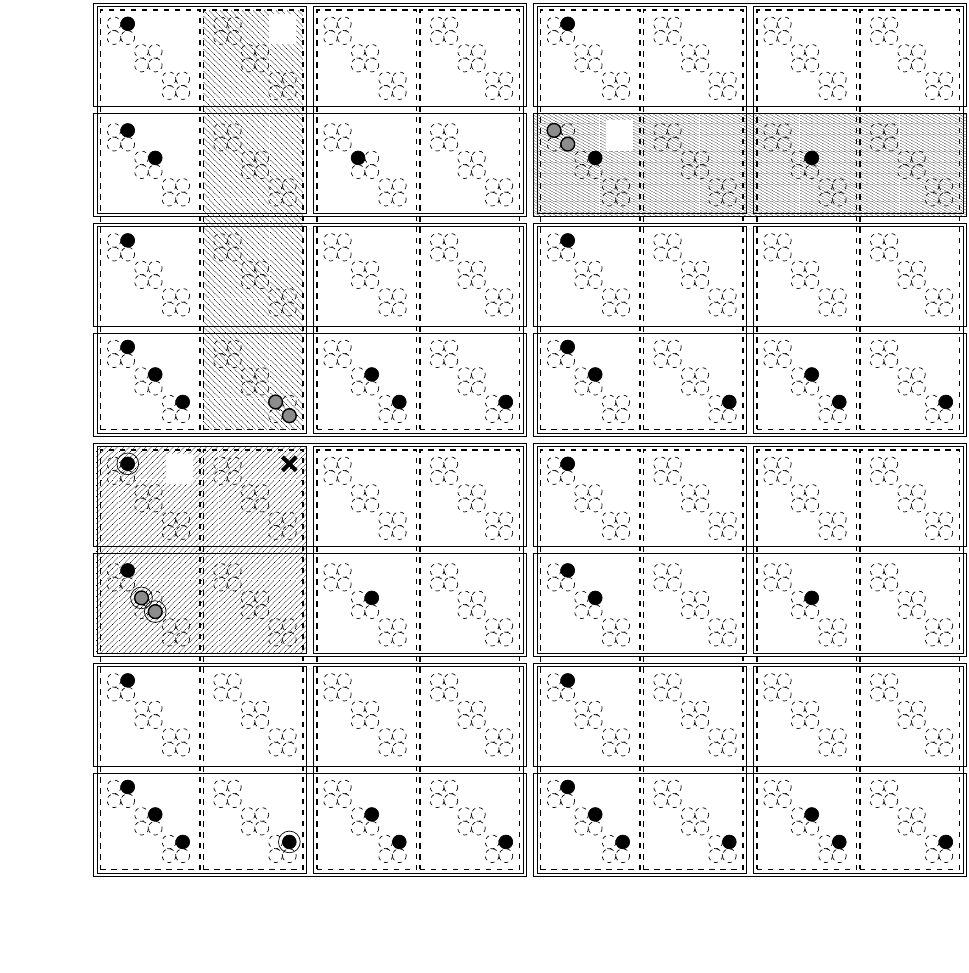_t}  
  \caption{Points corresponding to unmarked rectangles are replaced by
    two points. The example from Figure~\ref{fig:butterfly} and
    Figure~\ref{fig:rects} has three unmarked rectangles,
    corresponding to edges $a,b$ and $c$ of the butterfly graph. As
    shown, these rectangles become two (gray) input points and marked
    rectangles are represented by only one input point. The upper
    right corner of the two-sided query rectangle corresponding to the
     source~$s=001$ and sink~$t=110$ in the previous examples is
    shown as a $\times$. The double circled points are the points on
    the skyline of the query range. As can be seen, the unmarked
    rectangle corresponding to the edge labelled $b$ contributes two
    points to the skyline of the query $\times$.}
  \label{fig:rects2}
\end{figure}

\paragraph{Handling Marked and Unmarked Rectangles.}
The above steps are all independent of the concrete input subgraph
$G$. As discussed, we need a way to determine whether a query point is
contained in an unmarked rectangle or not. This step is now very
simple in light of Lemma~\ref{lem:sky}: First, multiply all
coordinates of points in $P'$ by $2$. This corresponds to expanding
each point with integer coordinates into a $2 \times 2$ grid. Now for
every point $p \in P'$, if the rectangle $\pi(r_k(i,j))$ from which
$p$ was generated is marked, then we add $1$ to both the $x$- and
$y$-coordinate of $p$, i.e.\ we move $p$ to the upper-right corner of
the $2 \times 2$ grid in which it is placed. If $\pi(r_k(i,j))$ is
unmarked, we replace it by two points, one where we add $1$ to the
$x$-coordinate, and one where we add $1$ to the $y$-coordinate, see
Figure~\ref{fig:rects2}. We denote the resulting set of points $P(G)$. It
follows immediately that:

\begin{corollary}
  Let $G$ be a subgraph of the butterfly graph with degree $B$ and
  depth~$d$. Also, let $(x,y)$ be a point with coordinates in $[B^d]
  \times [B^d]$. Then for the two-sided query rectangle $Q
  =(-\infty,2d(x+1)-1] \times (-\infty,2d(y+1)-1]$, it holds that
      $\Skyline(Q \cap P(G))$ contains precisely one point from $P(G)$
      for every marked rectangle in $\Rects$ that contains $(x,y)$,
      two points from $P(G)$ for every unmarked rectangle in $\Rects$
      that contains $(x,y)$, and no other points, i.e.\
      $|\Skyline(Q \cap P(G))|-d$ equals the number of unmarked
      rectangles in $\Rects$ which contains $(x,y)$.
\end{corollary}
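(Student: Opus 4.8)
The plan is to deduce the Corollary from Lemma~\ref{lem:sky} by showing that the two post-processing steps applied to $P'$ — doubling all coordinates, and then perturbing each point by $(1,1)$, $(1,0)$, or $(0,1)$ depending on whether its originating rectangle is marked or unmarked — do not disturb the skyline structure. Concretely, I would prove that for the appropriately scaled two-sided query, a point $p'\in P'$ lies on $\Skyline(Q'\cap P')$ if and only if \emph{every} point of $P(G)$ produced from $p'$ lies on $\Skyline(Q\cap P(G))$. Combined with Lemma~\ref{lem:sky}, which identifies $\Skyline(Q'\cap P')$ with the lower-left corners of the rectangles $\pi(r_k(i,j))$ whose $r_k(i,j)$ contains $(x,y)$, this yields the count. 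Throughout I would use that distinct rectangles of $\Rects$ give distinct points of $P'$ (within a fixed $k$ the rectangles $\pi(\Rects)_k$ are disjoint, and across different values of $k$ the offsets added by $\pi$ separate coinciding lower-left corners), as already observed before Lemma~\ref{lem:sky}.

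First I would pin down the correspondence between query rectangles. The Lemma~\ref{lem:sky} query is $Q'=(-\infty,dx+d-1]\times(-\infty,dy+d-1]$. After doubling, a coordinate $a$ of a point of $P'$ becomes $2a$ or $2a+1$ (the $+1$ only in the perturbation), so I need the threshold $T$ for which $a\le dx+d-1 \iff 2a\le T$ \emph{and} $a\le dx+d-1 \iff 2a+1\le T$; a one-line parity check gives the unique choice $T=2d(x+1)-1$, which is exactly the $Q$ in the statement (similarly in the $y$-direction). Consequently $p'\in Q'$ if and only if all images of $p'$ lie in $Q$, and if an image of some $q'$ dominates an image inside $Q$, then $q'$ itself lies in $Q'$.

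Next I would analyse domination among images. For distinct $p',q'\in P'$ with images $2p'+\delta$ and $2q'+\delta'$, where $\delta,\delta'\in\{(1,1),(1,0),(0,1)\}$, a coordinatewise check shows that $2q'+\delta'$ dominates $2p'+\delta$ exactly when, on each axis, either $q'$ strictly exceeds $p'$, or the two agree and $\delta'$ carries a $1$ on a coordinate where $\delta$ does. From this I would argue the two directions. (i) If $p'\in\Skyline(Q'\cap P')$: no image of any $q'\neq p'$ in $Q$ can dominate an image of $p'$, since that would force $q'$ to dominate $p'$ inside $Q'$; and the two images of an unmarked $p'$ do not dominate each other; hence all images of $p'$ survive on $\Skyline(Q\cap P(G))$. (ii) If $p'\notin\Skyline(Q'\cap P')$: either $p'\notin Q'$, so its images are not in $Q$, or there is $q'\neq p'$ in $Q'\cap P'$ dominating $p'$; in the latter case I would exhibit an image of $q'$ dominating every image of $p'$ — the $(1,1)$-image if $q'$ is marked, and, if $q'$ is unmarked, the $(0,1)$-image when $q'$ is strictly larger than $p'$ on the $x$-axis and the $(1,0)$-image otherwise — so no image of $p'$ is on the skyline. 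This gives that $\Skyline(Q\cap P(G))$ is precisely the set of images of $\Skyline(Q'\cap P')$.

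Finally I would count. By Lemma~\ref{lem:sky} (and the last paragraph of its proof), $\Skyline(Q'\cap P')$ consists of exactly one lower-left corner per layer index $k\in[d]$, namely the unique rectangle of $\pi(\Rects)_k$ containing $(dx+d-1,dy+d-1)$, so $|\Skyline(Q'\cap P')|=d$. Each such rectangle that is marked contributes one image point, each unmarked one contributes two, and the marked and unmarked rectangles containing $(x,y)$ total $d$; hence $|\Skyline(Q\cap P(G))| = d + (\text{number of unmarked rectangles of }\Rects\text{ containing }(x,y))$, which is the claim. The only genuinely delicate step is the domination case analysis for the perturbed points — this is exactly where the ``add $1$ to a single coordinate'' trick is cashed in, and it is the part I would write out in full detail; everything else is bookkeeping on top of Lemma~\ref{lem:sky}.
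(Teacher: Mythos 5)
Your proposal is correct, and it is precisely the argument the paper treats as immediate after Lemma~\ref{lem:sky}: the paper offers no written proof of this corollary, and your careful verification that the doubling-plus-perturbation step preserves the skyline (the threshold computation $2(dx+d-1)+1=2d(x+1)-1$ and the coordinatewise domination case analysis for the offsets $(1,1)$, $(1,0)$, $(0,1)$) together with the observation that exactly one rectangle per layer $k\in[d]$ contains $(x,y)$ is exactly what is needed to justify the claim. No gaps; you have simply written out the details the authors elide.
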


\begin{corollary}
\label{cor:reduct}
  Let $G$ be a subgraph of the butterfly graph with degree $B$ and
  depth~$d$. Let $s$ be the index of a source and $t$ the index of a
  sink. Then the $s$'th source can reach the $t$'th sink in $G$ 
  if and only if
  $|\Skyline(Q \cap P(G))|=d$ for the two-sided query rectangle $Q =
  (-\infty,2d(s+1)-1] \times (-\infty,2d(\rev_B(t)+1)-1]$.
\end{corollary}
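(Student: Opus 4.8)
The plan is to obtain Corollary~\ref{cor:reduct} as an immediate specialization of the corollary stated just above it. First I would instantiate that corollary at the point $(x,y)=(s,\rev_B(t))$; its two-sided query rectangle $(-\infty,2d(x+1)-1]\times(-\infty,2d(y+1)-1]$ then coincides exactly with the rectangle $Q=(-\infty,2d(s+1)-1]\times(-\infty,2d(\rev_B(t)+1)-1]$ in the statement. The preceding corollary thus tells us that $|\Skyline(Q\cap P(G))|-d$ equals the number of \emph{unmarked} rectangles of $\Rects$ that contain $(s,\rev_B(t))$.

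Next I would translate this count back to the butterfly graph. By the crucial observation from the edges-to-rectangles step, a rectangle $r_k(i,j)\in\Rects$ contains $(s,\rev_B(t))$ precisely when the edge $e_k(i,j)$ lies on the unique directed path from the $s$'th source to the $t$'th sink in the full butterfly graph. That path uses exactly one edge between each consecutive pair of layers, so exactly $d$ rectangles of $\Rects$ contain the point --- one for each layer index $k\in[d]$ --- which is also the accounting behind the ``$-d$'' term in the preceding corollary. Moreover, a rectangle $r_k(i,j)$ is unmarked exactly when $e_k(i,j)\notin G$. Hence $|\Skyline(Q\cap P(G))|-d$ equals the number of edges of the source-to-sink path that are missing from $G$.

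Finally I would conclude: this number is $0$ if and only if all $d$ edges of the path belong to $G$, and --- since the butterfly has a unique directed path between every source--sink pair --- the latter is equivalent to the $s$'th source being able to reach the $t$'th sink in $G$. Together with the identity from the preceding corollary this gives $|\Skyline(Q\cap P(G))|=d$ if and only if the $s$'th source reaches the $t$'th sink in $G$.

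I do not expect any real obstacle: all the substantive work is already contained in Lemma~\ref{lem:sky} and the two corollaries above. The only point deserving a sentence of care is the claim that exactly $d$ rectangles of $\Rects$ contain $(s,\rev_B(t))$; this follows from the observation made in the proof of Lemma~\ref{lem:sky} that $(dx+d-1,dy+d-1)$ lies in precisely one transformed rectangle $\pi(r_{k}(\cdot,\cdot))$ for each $k\in[d]$, combined with the fact that $r_k(i,j)$ contains $(x,y)$ iff $\pi(r_k(i,j))$ contains $(dx+d-1,dy+d-1)$.
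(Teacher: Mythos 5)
Your proposal is correct and follows essentially the same route as the paper, which states Corollary~\ref{cor:reduct} as an immediate consequence of the preceding corollary together with the earlier observation that a source--sink path exists in $G$ iff $(s,\rev_B(t))$ lies in no unmarked rectangle of $\Rects$. Your extra sentence justifying that exactly $d$ rectangles of $\Rects$ contain the query point (one per layer $k\in[d]$) is exactly the accounting the paper relies on for the ``$-d$'' term, so nothing is missing.
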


\paragraph{Deriving the Lower Bound.} 

The lower bound can be derived from Corollary~\ref{cor:reduct} and
Theorem~\ref{thm:butterfly} as follows.  First note that the set
$\Rects$ contains $NB$ rectangles, since each rectangle corresponds to
an edge of the buttefly graph and each of the $N$ non-sink nodes of
the butterfly graph has $B$ outgoing edges. Each of these rectangles
gives one or two points in~$P(G)$. Letting $n$ denote $|P(G)|$, we
have $NB\leq n\leq 2NB$. From $N=d\cdot B^d\le n$ we
get $d\le\lg n$ and $d=\Theta(\lg_B N)$.

Given $n$, $w\geq \lg n$, and $S\geq n$, we now derive a lower bound
on the query time.  Setting $B=\frac{S}{n}w^2$ we have $B=\Omega(w^2)$
and $\lg B=\Omega(\lg\frac{Sd}{N})$ (as required by
Theorem~\ref{thm:butterfly}), where the last bound follows from
$\lg\frac{Sd}{N}
\leq\lg\frac{S\cdot\lg n}{n/2B}
\leq\lg(2B\frac{S\cdot w}{n})
\leq\lg (2B^2)
=O(\lg B)$. 
Furthermore we have $\lg\frac{Sw}{n}
=\frac{1}{2}\lg(\frac{Sw}{n})^2
\geq\frac{1}{2}\lg(\frac{S}{n}w^2)
=\frac{1}{2}\lg B$. 
From Theorem~\ref{thm:butterfly} we can now bound the time for a 
skyline counting query by 
$t 
= \Omega(d) 
= \Omega(\lg_B N)
= \Omega(\lg n/\lg B)
= \Omega(\lg n/\lg(Sw/n))$.

\section{Skyline Counting Data Structure}
\label{sec:upper-bound}

In this section we describe a data structure using $O(n)$ space
supporting orthogonal skyline counting queries in $O(\lg n/\lg\lg n)$
time. We first describe the basic idea of how to support queries, then
present the details of the stored data structure and the details of the
query.

The basic idea is to store the $n$~points in left-to-right $x$-order
at the leaves of a balanced tree~$T$ of degree
$\Theta(\log^{\varepsilon} n)$, i.e. height~$O(\log n/\log\log n)$,
and for each internal node $v$ have a list $L_v$ of the points in the
subtree rooted in~$v$ in sorted $y$-order.  The \emph{slab} of $v$ is
the narrowest infinite vertical band containing $L_v$. To obtain the
overall linear space bound, $L_v$ will not be stored explicitly but
implicitly and rank-reduced using rank-select data structures, where
navigation is performed using fractional cascading on rank-select data
structures (details below). A 4-sided query $R$ decomposes into
2-sided subqueries at $O(\log n/\log\log n)$ nodes (in
Figure~\ref{fig:base-tree}, $R$ is decomposed into subqueries
$R_1$-$R_5$, white points are nodes on the skyline within $R$, double
circled points are the topmost points within each $R_i$). For skyline
queries (both counting and reporting) it is important to consider the
subqueries right-to-left, and the lower $y$-value for the subquery in
$R_i$ is raised to the maximal $y$-value of a point in the subqueries
to the right. Since the tree~$T$ has non-constant degree, we need
space efficient solutions for multislab queries at each node~$v$. We
partition $L_v$ into blocks of size $O(\log^{2\varepsilon} n)$, and a
query $R_i$ decomposes into five subqueries (1-5), see
Figure~\ref{fig:multicolumn}: (1) and (3) are on small subsets of
points within a single block and can be answered by tabulation (given
the \emph{signature} of the block); (2) is a block aligned multislab
query; (4) and (5) are for single slabs (at the children of $v$).  For
(2,4,5) the skyline size between points $i$ and $j$ (numbered
bottom-up) can be computed as one plus the difference between the size
of the skyline from 1 to $j$ and 1 to $k$, where $k$ is the rightmost
point between $i$ and $j$ (see Figure~\ref{fig:skyline-diff}, white
and black circles and crosses are all points, crosses indicate the
skyline from $i$ to $j$, white circles from 1 to $k$, and white
circles together with crosses from 1 to $j$). Finally, the skyline
size from 1 to $i$ can be computed from a prefix sum, if we for point
$i$ store the number of points in the skyline from 1 to $i-1$
dominated by $i$ (see Figure~\ref{fig:skyline-sum}, the skyline
between 1 and 6 consists of the three white nodes, and the size is
$6-(2+0+0+0+1+0)=3$).

We let $\Delta=\max\{2,\ceil{\lg^\varepsilon n}\}$ be a parameter of
our construction, where $0<\varepsilon<1/3$ is a constant.  We build a
balanced \emph{base tree} $T$ over the set of points $P$, where the
leafs from left-to-right store the points in $P$ in sorted order
w.r.t.\ $x$-coordinate. Each internal node of $T$ has degree at
most $\Delta$ and $T$ has height $\ceil{\lg_\Delta n}+1$. See
Figure~\ref{fig:base-tree}.

\begin{figure}[t]
  \centering
  \input{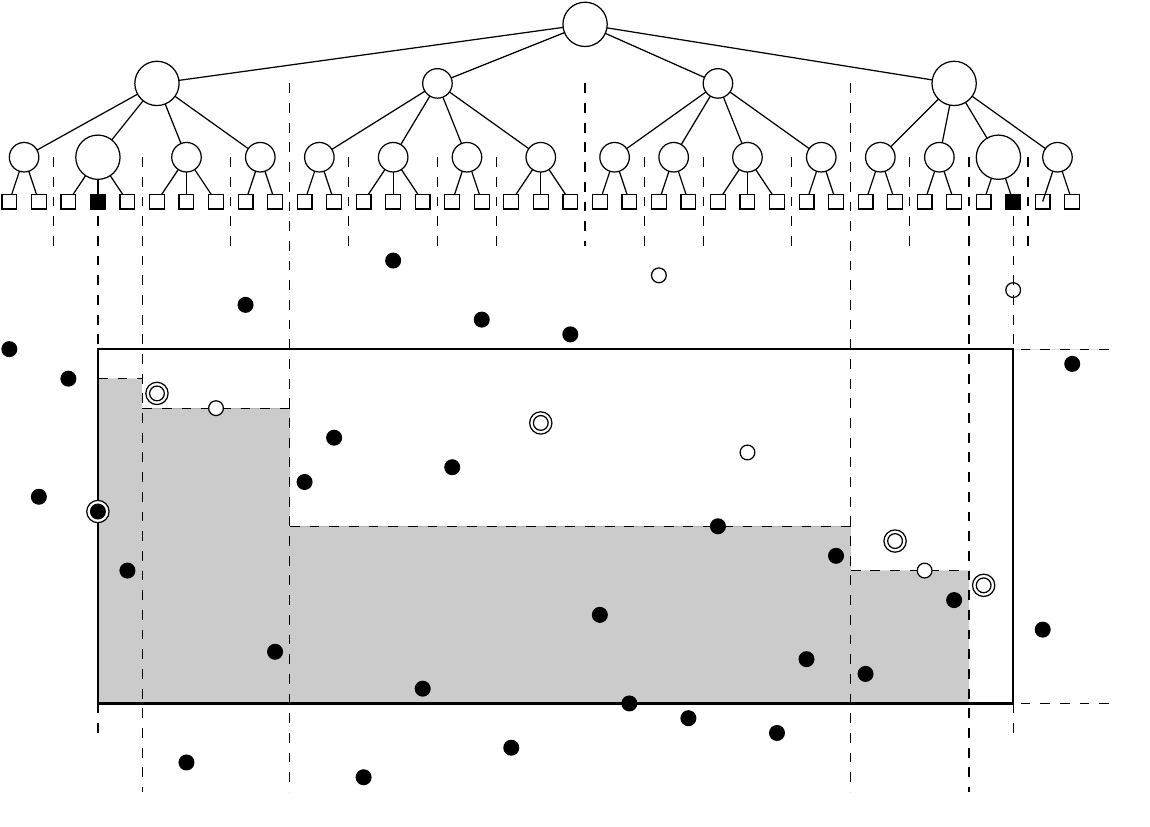_t}  
  \caption{The base tree $T$ with $\Delta=4$, and the decomposition of
    a query into a sequence of multislab queries $R_1$-$R_5$. White
    points are nodes on the skyline within $R$. The double circled
    points are the topmost points within each of the multislabs.}
  \label{fig:base-tree}
\end{figure}

For each internal node $v$ of $T$ we store a set of data structures.
Before describing these we need to introduce some notation. The
subtree of $T$ rooted at a node $v$ is denoted $T_v$, and the set of
points stored at the leaves of $T_v$ is denoted $P_v$. We let
$n_v=|P_v|$ and $L_v[1..n_v]$ be the list of the points in $P_v$
sorted in increasing $y$-order. We let $I_v=[\ell_v,r_v]$ denote the
$x$-interval defined by the $x$-coordinates of the points stored at
the leaves of~$T_v$, and denote $I_v \times [n]$ the \emph{slab}
spanned by $v$.  The degree of $v$ is denoted $d_v$, the children of
$v$ are from left-to-right denoted $c_v^1,\ldots,c_v^{d_v}$, and the
parent of node $v$ is denoted $p_v$.  A list~$L_v$ is partitioned into
a sequence of blocks $B_v[1..\ceil{n_v/\Delta^2}]$ of size~$\Delta^2$,
such that $B_v[i]=L_v[(i-1)\Delta^2+1..\min\{n_v,i\Delta^2\}]$.  The
\emph{signature} $\sigma_v[i]$ of a block $B_v[i]$ is a list of pairs:
For each point~$p$ from $B_v[i]$ in increasing $y$-order we construct
a pair~$(j,r)$, where $j$ is the index of the child $c_v^j$ of $v$
storing $p$ and $r$ is the rank of $p$'s $x$-coordinate among all
points in $B_v[i]$ stored at the same child $c_v^j$ as $p$.  The total
number of bits required for a signature is at most $\Delta^2(\lg
\Delta+\lg \Delta^2)=O(\lg^{2\varepsilon} n\cdot\lg\lg n)$.

To achieve overall $O(n)$ space we need to encode succinctly
sufficient information for performing queries.  In particular we will
\emph{not} store the points in $L_v$ explicitly at the node $v$, but
only partial information about the points relative position will be
stored.

Queries on a block $B_v[i]$ are handled using table lookups in global
tables using the block signature $\sigma_v[i]$.  We have tables for
the below block queries, where we assume $\sigma$ is the signature of
a block storing points $p_1,\ldots,p_{\Delta^2}$ distributed in
$\Delta$ child slabs.

\begin{description}
\item[$\mathrm{Below}(\sigma,t,i)$] Returns the number of points from
  $p_1,\ldots,p_t$ contained in slab $i$.

\item[$\mathrm{Rightmost}(\sigma,b,t,i,j)$] Returns $k$, where $p_k$
  is the rightmost point among $p_b,\ldots,p_t$ contained in slabs
  $[i,j]$. If no such point exists, -1 is returned.

\item[$\mathrm{Topmost}(\sigma,b,t,i,j)$] Returns $k$, where $p_k$ is
  the topmost point among $p_b,\ldots,p_t$ contained in slabs
  $[i,j]$. If no such point exists, -1 is returned.

\item[$\mathrm{SkyCount}(\sigma,b,t,i,j)$] Returns the size of the
  skyline for the subset of the points $p_b,\ldots,p_t$ contained in
  slabs~$[i,j]$.

\end{description}

The arguments to each of the above lookups consists of at most
$|\sigma|+2\lg \Delta^2+2\lg \Delta=|\sigma|+O(\lg\lg
n)=O(\lg^{2\varepsilon} n\cdot\lg\lg n)$ bits and the answer is
$\lg(\Delta+1)=O(\lg\lg n)$ bits, i.e.\ each query can be answered in
$O(1)$ time using a table of size $O(2^{\lg^{2\varepsilon} n\cdot
  \lg\lg n}\cdot\lg\lg n)=o(n)$ bits, since $\varepsilon<1/3$.

For each internal node $v$ of $T$ we store the following data
structures, each having $O(1)$ access time.

\begin{description} 
\item[$C_v(i)$] Compact array that for each $i$, where $1\leq i\leq
  n_v$, stores the index of the child of $v$ storing $L_v[i]$,
  i.e.\ $1\leq C_v(i)\leq \Delta$.  Space usage $O(n_v\lg \Delta)$
  bits.

\item[{$\pi_v(i)$}] For each $i$, $1\leq i\leq n_v$, stores the index
  of $L_v[i]$ in $L_{p_v}$, i.e.\ $L_{p_v}[\pi_v(i)]=L_v[i]$. This can
  be supported by constructing the select data structure of
  Lemma~\ref{lem:rankselect} on the bit-vector $X$, where $X[i]=1$ if
  and only if $L_{p_v}[i]$ is in $L_v$. A query to $\pi_v(i)$ simply
  becomes a select$(i)$ query.  Space usage $O(n_v\lg
  (n_{p_v}/n_v))=O(n_v\lg \Delta)$ bits.

\item[{$\sigma_v(i)$}] Array of signatures for the blocks
  $B_v[1..\ceil{n_v/\Delta^2}]$. Space usage
  $O(n_v/\Delta^2\cdot\Delta^2\cdot\lg\Delta)=O(n_v\lg\Delta)$ bits.

\item[{$\Pred_v(t,i)$ / $\Succ_v(t,i)$}] Supports finding the
  predecessor/successor of $L_v[t]$ in the $i$'th child list
  $L_{c_v^i}$. Returns $\max \{ k \mid 1\leq k \leq n_{c_v^i} \wedge
  \pi_{c_v^i}[k]\leq t \}$ and $\min \{ k \mid 1\leq k \leq n_{c_v^i}
  \wedge \pi_{c_v^i}[k]\geq t \}$, respectively.  For each child
  index~$i$, we construct an array $X^i$ of size $\ceil{n/\Delta^2}$,
  such that $X^i[b]$ is the number of points in block $B_v[b]$ that
  are stored in the $i$'th child slab. The prefix sums of each $X^i$
  are stored using the data structure of Lemma~\ref{lem:prefixsum}
  using space $O((n_{v}/\Delta^2)\lg(\Delta^2))$ bits. The total
  space for all $\Delta$ children of $v$ becomes $O(\Delta\cdot
  n_v/\Delta^2 \cdot \lg\Delta)=O(n_v)$ bits. The result of a
  $\Pred_v(t,i)$ query is $\sum_{j=1}^{\ceil{t/\Delta^2}-1}
  X^i[j]+\mathrm{Below}(\sigma_v(\ceil{t/\Delta^2}),1+(t-1\bmod\Delta^2),i)$,
  where the first term can be computed in $O(1)$ time by
  Lemma~\ref{lem:prefixsum} and the second term is a constant time
  global table lookup. The result of $\Succ_v(t,i)=\Pred_v(t,i)$ if
  $C_v[t]=i$, otherwise $\Succ_v(t,i)=\Pred_v(t,i)+1$.

\item[{$\mathrm{Rightmost}_v(i,j)$}] Returns the index $k$, where
  $i\leq k\leq j$, such that $L_v[k]$ has the maximum $x$-value among
  $L_v[i..j]$. Using Lemma~\ref{lem:rmq} on the array of the
  $x$-coordinates of the points in $L_v$ we achieve $O(1)$ time
  queries and space usage $O(n_v)$ bits.

\item[{$\mathrm{SkyCount}_v(i)$}] Returns
  $|\Skyline(L_v[1..i])|$. Construct an array $X$, where $X[i]$ is the
  number of points in $\Skyline(L_v[1..i-1])$ dominated by
  $L_v[i]$. See Figure~\ref{fig:skyline-sum}. We can now compute
  $|\Skyline(L_v[1..i])|$ as $i-\sum_{j=1}^i X[j]$.  Using
  Lemma~\ref{lem:prefixsum} the query time becomes $O(1)$ and the
  space usage $O(n_v)$ bits, since $\sum_{j=1}^{n_v} X[j]\leq n_v-1$.

\item[{$\mathrm{SkyCount}_v(i,j)$}] Returns $|\Skyline(L_v[i..j])|$,
  computable by the following expression (see
  Figure~\ref{fig:skyline-diff}):
  $$\mathrm{SkyCount}_v(j)-\mathrm{SkyCount}_v(\mathrm{Rightmost}_v(i,j))+1\;.$$

\end{description}

\begin{figure}
  \centering
  \input{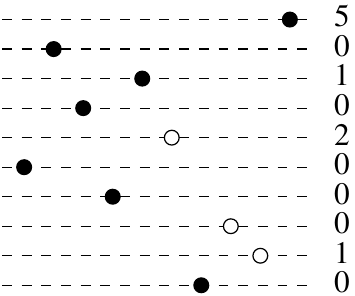_t}  
  \caption{Computation of $|\Skyline(L_v[1..i])|$. To the right of
    each point $L_v[i]$ is shown the number of points in
    $\Skyline(L_v[1..i-1])$ dominated by $L_v[i]$. The skyline of
    $L_v[1..6]$ consists of the three white
    nodes. $|\Skyline(L_v[1..6])|=6-2-0-0-0-1-0=3$.}
  \label{fig:skyline-sum}
\end{figure}

\begin{figure}
  \centering
  \input{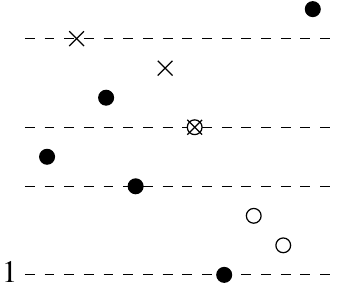_t}  
  \caption{Illustration of $\mathrm{SkyCount}_v(i,j)$.  White and black
    circles and crosses are all points. $L_v[k]$ is the rightmost
    point in $L_v[i..j]$. Crosses indicate $\Skyline(L_v[i..j])$,
    white circles indicate $\Skyline(L_v[1..k])$, and white circles
    together with crosses is $\Skyline(L_v[1..j])$.}
  \label{fig:skyline-diff}
\end{figure}

Finally, we store for each node $v$ and slab interval $[i,j]$ the
following data structures.

\begin{description}

\item[$\mathrm{Rightmost}_{v,i,j}(b,t)$] Returns $k$, where $L_v[k]$
  is the rightmost point among the points in blocks $B_v[b..t]$
  contained in slabs $[i,j]$. If no such point exists, -1 is
  returned. Can be solved by applying Lemma~\ref{lem:rmq} to the array
  $X$, where $X[s]$ is the $x$-coordinate of the rightmost point in
  $B_v[s]$ contained in slabs $[i,j]$. A query first finds the block
  $\ell$ containing the rightmost point using this data structure, and
  then returns
  $(\ell-1)\Delta^2+\mathrm{Rightmost}(\sigma_v[\ell],1,\Delta^2,i,j)$.
  Space usage $O(n_v/\Delta^2)$ bits.

\item[$\mathrm{Topmost}_{v,i,j}(b,t)$] Returns $k$, where $L_v[k]$ is
  the topmost point among the points in blocks $B_v[b..t]$ contained
  in slabs $[i,j]$. If no such point exists, -1 is returned. Can be
  solved by first using Lemma~\ref{lem:rmq} on the array $X$, where
  $X[s]=s$ if there exists a point in $B_v[s]$ contained in slabs
  $[i,j]$. Otherwise $X[s]=0$. Let $\ell$ be the block found using
  Lemma~\ref{lem:rmq}. Return the result of
  $(\ell-1)\Delta^2+\mathrm{Topmost}(\sigma_v[\ell],1,\Delta^2,i,j)$. 
  Space usage $O(n_v/\Delta^2)$ bits.

\item[$\mathrm{SkyCount}_{v,i,j}(b,t)$] Returns the size of the
  skyline for the subset of points in blocks $B_v[b..t]$ contained in
  slabs $[i,j]$. Can be supported by two applications of
  Lemma~\ref{lem:prefixsum} on two arrays $X$ and $Y$ as follows.  Let
  $X[s]=\mathrm{SkyCount}(\sigma_v[s],1,\Delta^2,i,j)$, i.e.\ the size
  of the skyline of the points in block $B_v[s]$ contained in slabs
  $[i,j]$.  Let $B_{v,i,j}[s]$ denote the points in $B_v[s]$ contained
  in slabs~$[i,j]$.  Let $Y[s]=|\Skyline(B_{v,i,j}[1..s-1]) \setminus
  \Skyline(B_{v,i,j}[1..s])|$, i.e.\ the number of points on
  $\Skyline(B_{v,i,j}[1..s-1])$ dominated by points in $B_{v,i,j}[s]$.
  Space usage for $X$ and $Y$ is $O(n_v/\Delta^2\cdot\lg \Delta^2)$
  bits.  We can compute $\mathrm{SkyCount}_{v,i,j}(b,t)=\sum_{s=k}^t
  X[s]-\sum_{s=k+1}^t Y[s]$, where
  $k=\ceil{\mathrm{Rightmost}_{v,i,j}(b,t)/\Delta^2}$.

\end{description}

The total space of our data structure, in addition to the $o(n)$ bits
for our global tables, can be bounded as follows.  The total space for
all $O(\Delta^2)$ multislab data structures for a node $v$ is
$O(\Delta^2\cdot n_v/\Delta^2\cdot\lg \Delta)$ bits. The total space
for all data structures at a node $v$ becomes $O(n_v\lg \Delta)$
bits. Since the sum of all $n_v$ for a level of $T$ is at most $n$,
the total space for all nodes at a level of $T$ is $O(n\lg\Delta)$
bits. Since $T$ has height~$O(\lg_\Delta n)$, the total space usage
becomes $O(n\lg\Delta\cdot\lg_\Delta n)=O(n\lg n)$ bits,
i.e.\ $O(n)$ words.
The data structure can be constructed bottom-up in $O(n\log
n)$ time.

\subsection{Skyline Range Counting Queries}
\label{sec:query}

To answer a skyline counting query $R=[x_1,x_2]\times[y_1,y_2]$, we
identify the nodes on the paths in $T$ from the two leaves storing
$x_1$ and $x_2$ up to the lowest common ancestor of the two
leaves. Let $v_1,\ldots,v_m$ be the set of these nodes in a
right-to-left traversal in $T$ (see Figure~\ref{fig:base-tree}). The
horizontal span of the query, $[x_1,x_2]$, is the concatenation of the
span of at most one multislab $I_1,\ldots,I_m$ from each of
$v_1,\ldots,v_m$. For each such multislab $I_\ell$ we form a new
subquery $R_\ell=I_\ell\times [z_\ell,y_2]$, completely spanning the
multislab in the horizontal direction and vertically has a range
$[z_\ell,y_2]$, where $z_1=y_1$ and
$z_\ell=\max\{z_{\ell-1},y_\ell^{\max}+1\}$, for $\ell=2$ to $m$ and
$y_\ell^{\max}$ is the maximal $y$-coordinate of a point in
$I_{\ell-1}\times[1,y_2]$. By definition of the $R_\ell$ queries, the
skyline of the points contained within $R$ is exactly the union of the
skylines for each of the $R_\ell$ subqueries (see
Figure~\ref{fig:base-tree}), since the points in $R_\ell$ cannot be
dominated by other points that are both in $R$ and to the right of
$I_\ell$.

To navigate in $T$ we need to find the index of the successor of $y_1$
and the predecessor $y_2$ in each of the $L_{v_\ell}$ lists.  We start
with $y_1$ and $y_2$ being the indexes at the root, and then use the
$\mathrm{Succ}_v$/$\mathrm{Pred}_v$ structures at the nodes to find
the successor of $y_1$ and predecessor of $y_2$ at all the nodes on
the two paths from the root to $x_1$ and $x_2$. To find the topmost
point below $y_2$ in a multislab we use $\mathrm{Topmost}_{v,i,j}$. To
navigate $y^{\max}$ values up and down between the levels of $T$ we
use $\pi_v(y^{\max})$ to move upwards and
$\mathrm{Succ}_v(y^{\max},j)$ to move downwards to a slab~$j$. These
navigations can be performed in $O(1)$ time per node on the paths,
i.e.\ total time $O(\lg_\Delta n)$.

What remains is to compute in $O(1)$ time the size the skyline within
a query range $R_\ell$. In the following we consider a query range that
horizontally spans the child slabs $[i,j]$ of a node $v$, and
vertically spans the indexes $[y_{\mathrm{bottom}},y_{\mathrm{top}}]$
in $L_v$.

If the query range is within a single block of $L_v$
(i.e.\ $\ceil{y_{\mathrm{bottom}}/\Delta^2}=\ceil{y_{\mathrm{top}}/\Delta^2}$),
we compute the skyline size as
\[
\mathrm{SkyCount}(\sigma_v(\ceil{y_{\mathrm{top}}/\Delta^2}),
  1+(y_{\mathrm{bottom}}-1 \bmod \Delta^2),
  1+(y_{\mathrm{top}}-1 \bmod\Delta^2),i,j)\;.
\]

Otherwise we decompose the skyline counting query into five subranges
(1)-(5), see Figure~\ref{fig:multicolumn}.  We first compute the
$y$-coordinate of the rightmost point~$p_1$ in the top block
$B_{\mathrm{top}}$ of the query range using
$$p_1.y = \mathrm{Rightmost}(\sigma_v(\ceil{y_{\mathrm{top}}/\Delta^2}),
1,1+(y_{\mathrm{top}}-1 \bmod\Delta^2),i,j)+\Delta^2\ceil{y_{\mathrm{top}}/\Delta^2-1}\;,$$
and compute the size of the skyline of the intersection of $B_{\ttop}$
and the query region by:
\begin{equation}
\mathrm{SkyCount}(\sigma_v(\ceil{y_{\mathrm{top}}/\Delta^2}),1,1+(y_{\mathrm{top}}-1
\bmod\Delta^2),i,j)\;.
\end{equation}
Let $k_1$ be the slab containing $p_1$, computable as
$k_1=C_v(p_1.y)$. If no point is found in block $B_{\mathrm{top}}$,
then $k_1=i-1$.

\begin{figure}
  \centering
  \input{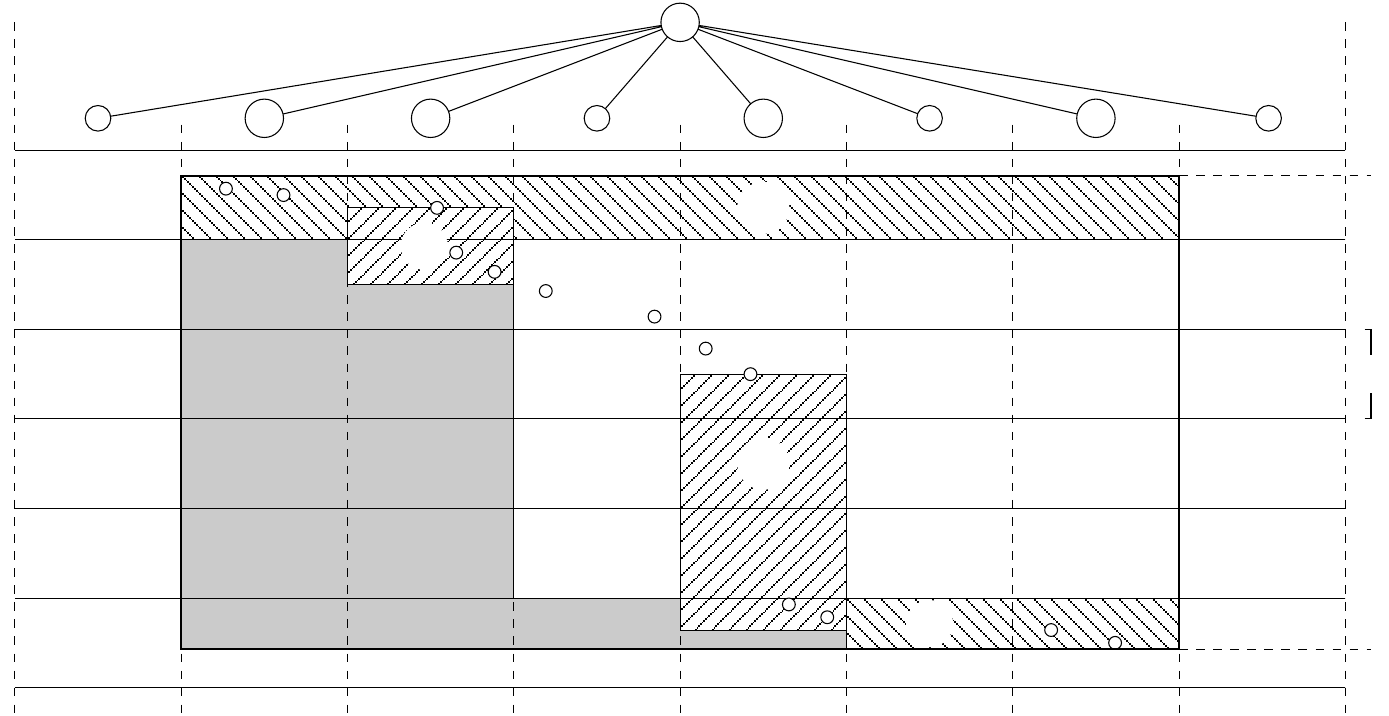_t}  
  \caption{Skyline queries for multislabs.}
  \label{fig:multicolumn}
\end{figure}

Next we compute the $y$-coordinate of the topmost point $p_2$ in the
multislab query range spanning slabs $[k_1+1,j]$ and all blocks
between $B_{\mathrm{bottom}}$ and $B_{\mathrm{top}}$.
$$p_2.y=\mathrm{Topmost}_{v,k_1+1,j}(\ceil{y_{\mathrm{bottom}}/\Delta^2}+1,\ceil{y_{\mathrm{top}}/\Delta^2}-1)\;.$$
In the same subrange we find the $y$-coordinate of the rightmost point
$p_3$ using
$$p_3.y
=\mathrm{Rightmost}_{v,k_1+1,j}(\ceil{y_{\mathrm{bottom}}/\Delta^2}+1,\ceil{y_{\mathrm{top}}/\Delta^2}-1)\;.$$
Finally, the number of points on the skyline between $p_2$ and $p_3$
(including $p_2$ and $p_3$) is computed by
\begin{equation}
\mathrm{SkyCount}_{v,k_1+1,j}(\ceil{y_{\mathrm{bottom}}/\Delta^2}+1,\ceil{y_{\mathrm{top}}/\Delta^2}-1)\;.
\end{equation}

The slab containing the point $p_3$ is $k_3=C_v(p_3.y)$.  We
compute the number of points on the skyline to the right of $p_3$ in
block $B_{\mathrm{bottom}}$ by
\begin{equation}
\mathrm{SkyCount}(\sigma_v(\ceil{y_{\mathrm{bottom}}/\Delta^2}),1+(y_{\mathrm{bottom}}-1 \bmod\Delta^2),\Delta^2,k_3+1,j)\;,
\end{equation}
and the $y$-coordinate of the topmost point $p_4$ in block $B_{\mathrm{bottom}}$ contained in slabs $[k_3+1,j]$ by
$$p_4.y =
\mathrm{Topmost}(\sigma_v(\ceil{y_{\mathrm{bottom}}/\Delta^2}),1+(y_{\mathrm{bottom}}-1
\bmod\Delta^2),\Delta^2,k_3+1,j)+\Delta^2\ceil{y_{\mathrm{bottom}}/\Delta^2-1}\;.$$

The remaining points to be counted are the skyline points in slab
$k_1$ between $p_1$ and $p_2$, and in slab $k_3$ between $p_3$ and the
point $p_4$ in block $B_{\mathrm{bottom}}$. These values can
be computed by
\begin{equation}
\mathrm{SkyCount}_{c_v^{k_1}}(\mathrm{Succ}_v(p_2.y+1,k_1),\mathrm{Pred}_v(p_1.y,k_1))-1
\end{equation}
\begin{equation}
\mathrm{SkyCount}_{c_v^{k_3}}(\mathrm{Succ}_v(p_4.y,k_3),\mathrm{Pred}_v(p_3.y,k_3))-1\;,
\end{equation}
where we subtract one in both expressions, to avoid double counting $p_1$ and $p_3$.

Figure~\ref{fig:multicolumn} illustrates the five partial counts
computed. In the above we assumed that all queries ranges were
non-empty. In case $p_1$ does not exist, then $k_1=i-1$ and (4) is not
computed. If $p_4$ does not exist, then (5) stretches down to
$y_{\mathrm{bottom}}$. If $p_2$ and $p_3$ do not exist ($p_2$ and
$p_3$ are the same point if (4) only contains one maximal point), then
(2) and (5) are not computed, the leftmost slab of (3) is $k_1+1$, and
(4) stretches down to $p_4.y+1$.

To summarize, it follows that the skyline size for each multislab
query $R_\ell$ can be computed in $O(1)$ time, and the total time for a
skyline counting query becomes $O(\lg n/\lg\lg n)$.

\section{Skyline Range Reporting}
\label{sec:reporting}

In this section, we show how to extend our skyline range
counting data structure from Section~\ref{sec:upper-bound}
to also support reporting. Given a query
rectangle $R=[x_1,x_2] \times [y_1,y_2]$, we let $v_1,\dots,v_m$ and
$I_1,\dots,I_m$ be defined as in Section~\ref{sec:query}. The goal is to report
the skyline for each of the subqueries $R_\ell = I_\ell \times [z_\ell , y_2]$
where $z_1 = y_1$ and $z_\ell = \max\{z_{\ell-1},y_\ell^{\max}+1\}$ for
$\ell=2$ to $m$ and $y_\ell^{\max}$ is the maximal $y$-coordinate of a point in
$I_1 \times [1,y_2]$. Using the approach from Section~\ref{sec:query} we assume
the $z_\ell$'s have been computed as well as the index of the successor
of $y_1$ and the index of the predecessor of $z_\ell$ in each of the
$L_{v_\ell}$ lists. Recall the lists $L_{v}$ are not stored explicitly.

To answer the query $R_\ell$ at a node $v=v_\ell$, let $[i,j]$ be the range
of children of $v$ that are spanned by $R_\ell$ in the horizontal
direction and let $y_\bottom$ be the index of the successor of $z_\ell$
in $L_v$ and $y_\ttop$ the index of the predecessor of $y_2$ in
$L_v$. We first produce an output list $Y_\ell$ storing each point of
$\Skyline(R_\ell \cap P_v)$ as an index into $L_v$. The key observation
for producing this list is that the skyline inside a query rectangle
is the set of points produced by the following procedure: First report
the rightmost point in the query range and then recurse on the query
rectangle obtained by moving the bottom side of the query to just
above the returned point.

We implement this strategy in the following. First, if $R_\ell$ is
completely within one block (block $\lceil y_\ttop / \Delta^2
\rceil$ of $L_v$), we answer it by first running
$$\Rightmost(\sigma_v(\lceil y_\ttop/\Delta^2 \rceil),1+(y_\bottom-1
\bmod \Delta^2),1+(y_\ttop-1 \bmod \Delta^2),i,j)\;.$$ 
Adding
$\Delta^2\lceil y_\ttop / \Delta^2-1\rceil$ to the returned value
gives the index $k$ into $L_v$ of the rightmost point in the
output. We add $k$ to $Y_\ell$ and recurse on the query rectangle with
$y$-range from $k+1$ to $y_\ttop$.

If the query range is not contained in one block, we use the
decomposition into five queries that was introduced in
Section~\ref{sec:query}, see Figure~\ref{fig:multicolumn}. We define
$p_1,\dots,p_4$ and $(1),\dots,(5)$ as in Section~\ref{sec:query}. The
subquery $(1)$ is answered as just described for the case of a query
range completely within a block. The query $(4)$ is answered using
first the query
$\Rightmost_{c_v^{k_1}}(\Succ_v(p_2.y+1,k_1),\Pred_v(p_1.y,k_1))$. 
Following that, we move the bottom of the query rectangle just above
the returned point and recurse.

The query range~$(2)$ is answered by first repeatedly using the
$\Rightmost_{v,k_1+1,j}$ operation to identify the blocks within the
query range~$(2)$ containing points from $\Skyline(R_\ell\cap P_v)$.
Let $q_1,\ldots,q_t$ be the indexes into $L_v$ of the rightmost points
returned in each of these blocks, computable by
$$q_1=\Rightmost_{v,k_1+1,j}(\lceil y_\bottom/\Delta^2\rceil+1,\lceil
y_\ttop/\Delta^2 \rceil -1)\;,$$
and for $r>1$ (until no further point is found) by
$$q_r=\Rightmost_{v,k_1+1,j}(\lceil q_{r-1}/\Delta^2\rceil+1,\lceil
y_\ttop/\Delta^2 \rceil -1)\;.$$ Within each block $\lceil
q_r/\Delta^2\rceil$ we compute the additional points that should be
reported within slabs $[k_1,j]$ from right-to-left, starting
with $$\Rightmost(\sigma_v(\lceil q_r/\Delta^2\rceil),2+(q_r-1 \bmod
\Delta^2),\Delta^2,k_1,j)\;,$$ until no point is found or we find the
first point $f$ that should not be reported, i.e.\ $f$ is dominated by
$q_{r+1}$, which can be checked by the condition $\gamma<C_v(q_{r+1})$
or $\gamma=C_v(q_{r+1})$ and $q'=\Rightmost_{v'}(f',q')$, where
$\gamma=C_v(f)$, $v'=c_v^{\gamma}$ and $f'=\Pred_v(f,\gamma)$, and
$q'=\Pred_v(q_{r+1},\gamma)$.

The query $(5)$ is answered by repeatedly
using $\Rightmost_{c_v^{k_3}}$ and finally we answer $(3)$ using
$$\Rightmost(\sigma_v(\lceil y_\bottom/\Delta^2 \rceil),1+(y_\bottom-1
\bmod \Delta^2),\Delta^2,k_3+1,j)$$ and recursing above the returned
point. It follows that the list $Y_\ell$ is produced in
$O(1+|\Skyline(R_\ell \cap P_v)|) = O(1+|\Skyline(R_\ell \cap P)|)$
time. Summing over all lists $Y_\ell$, we get a total time of $O(\lg
n/\lg \lg n + k)$. 

What remains is to map the indices in the lists $Y_\ell$ to the actual
coordinates of the corresponding points. Using the $\pi$ arrays, this
can be done by repeatedly determining the position of $L_v[i]$ in
$L_{p_v}$. Doing this for all $O(\lg n/\lg \lg n)$ levels of the tree
allows one to deduce the global $y$-rank of the point corresponding to
$L_v[i]$. Storing an additional $O(n)$ sized array mapping global
$y$-ranks to the corresponding points gives a total running time of
$O((1+k) \lg n/\lg \lg n)$. To speed this up, we use the
Ball-Inheritance structure of~\cite{socg11}. For completeness, we describe
how this data structure is implemented in terms of the $\pi$ arrays we
have defined: Let $B \geq 2$ be a parameter. For every level $j$ in
the base tree $T$ that is a multiple of $B^i$, for $i=0,\dots,\lg_B
\lg_\Delta n$, we let all nodes $v$ at level $j$ store the following
array:

\begin{description}

\item [$\pi^{(B^i)}_v(j)$] For each $j, 1 \leq j \leq n_v$, stores the
  index of $L_v[j]$ in $L_{u(v)}$. Here $u(v)$ is the ancestor of $v$
  at the nearest level that is a multiple of $B^{i+1}$ (excluding
  possibly the level storing $v$). This can be supported by
  constructing the select data structure of Lemma~\ref{lem:rankselect} on the
  bit-vector $X$, where $X[j]=1$ if and only if $L_{u(v)}[j]$ is in
  $L_v$. A query $\pi^{(B^i)}_v(j)$ becomes $\select(j)$. The space usage
  for $\pi_v^{(B^i)}$ becomes
  $O(n_v \lg(n_{u(v)}/n_v)) = O(n_v \lg (\Delta^{B^{i+1}})) = O(n_v
  B^{i+1}\lg \Delta)$ bits.

\end{description}

Given an index $i$ into $L_v$, we can now recover $L_v[i]$ by using
the $\pi$ arrays to first jump $B$ levels up, then $B^2$ levels up and
so forth. The number of jumps becomes $O(\lg_B \lg_\Delta n)$ and hence
we get a query time of $O(\lg n/\lg \lg n + k\lg_B \lg_\Delta n)$. The
total space usage for all $\pi$ arrays becomes
$$
O\left(\sum_{i=1}^{\lg_B \lg_\Delta n} \frac{\lg_\Delta n}{B^i} \cdot B^{i+1} \lg \Delta \right) = 
O(n\lg n  \cdot (B \lg_B \lg_\Delta n))
$$ bits. Setting $B=\lg^{\varepsilon}n$ for an arbitrarily small
constant $\varepsilon>0$ gives a data structure with query time $O(\lg
n/\lg \lg n + k)$ and space usage $O(n \lg^{\varepsilon} n)$
words. Setting $B=2$ gives a data structure with query time $O(\lg
n/\lg \lg n+k\lg \lg n)$ and space usage $O(n \lg \lg n)$ words as
claimed.

\bibliographystyle{plain}
\bibliography{paper}

\end{document}